\documentclass{article}
%% Packages:
\usepackage[utf8]{inputenc} 
\usepackage[T1]{fontenc}
\usepackage{url}              % provides \url{...}
\usepackage{cite}             % improves presentation of citations

\usepackage[cmex10]{amsmath}  % Use the [cmex10] option to ensure complicance
                              % with IEEEXplore (see bare_conf.tex)
\interdisplaylinepenalty=1000 % As explained in bare_conf.tex
\usepackage{mleftright}       % fix to wrong spacing of \left-,
\mleftright                   % \middle- \right-commands 

\usepackage{graphicx}         % provides \includegraphics{...} to
                              % include graphics (pdf format)
\usepackage{booktabs}         % fixes poor spacing in tables and
                              % provides \toprule, \midrule, \bottomrule
\usepackage{mathdots}
%\usepackage{algorithmicx}    % provides an algorithmic environment for
                              % describing algorithms. See
                              % https://ctan.org/pkg/algorithmicx

% \usepackage[caption=false,font=footnotesize]{subfig}
                              % provides subnumbering within a
                              % floating figure or table

%% For arrays and multiple-line equations, use the
%% IEEEeqnarray-environment. See
%%              https://moser-isi.ethz.ch/manuals.html#eqlatex  
%% for instructions.

%% Do NOT use amsthm or hyperref!
%% -IEEEtran provides its own versions of theorems.
%% -IEEEXplore does not accept submissions with hyperlinks

%%%%%
%% correct bad hyphenation here
\hyphenation{op-tical net-works semi-conduc-tor}

\usepackage{amssymb,amsfonts}
\usepackage{enumerate}
\usepackage{comment}
\usepackage{tabularx}
\usepackage{floatrow} %to use textcolor{}

\newtheorem{theorem}{Theorem}

\newtheorem{lemma}{Lemma}
\newtheorem{proposition}{Proposition}

\newtheorem{definition}{Definition}
\newtheorem{remark}{Remark}
\newtheorem{example}{Example}
\newtheorem{proof}{Proof}

\title{Binary convolutional codes with optimal column distances}
\author{Zita Abreu, Julia Lieb, Joachim Rosenthal}
\date{}

\begin{document}

\maketitle
\begin{abstract}
%   THIS PAPER IS ELIGIBLE FOR THE STUDENT PAPER AWARD. 
    There exists a large literature of construction of convolutional codes with maximal or near maximal free distance. Much less is known about constructions of convolutional codes having optimal or near optimal column distances.
  In this paper, a new construction of convolutional codes over the binary field with optimal column distances is presented. %\textcolor{red}{to be finished}
\end{abstract}

\section{Introduction}
%\textcolor{red}{places where we could save some space if needed at the end: definition of Hamming weight and distance could be done shorter; we could say less about MDP codes; we could delete Example 1}
Currently, all real communication channels are noisy, so there is a need for communication systems to use error-correcting codes. 
The distance of a code provides a measure for evaluating its ability to protect data from errors. Codes with larger distance are better because they allow to correct more errors.
One type of error-correcting codes is convolutional codes, which are very suitable for erasure channels, such as the Internet.
Convolutional codes possess different kinds of distance notions.
 One type of distance for convolutional codes is the column distances, which are considered for sequential decoding of the received information with low delay. Moreover, there is the free distance, which is considered when decoding delay does not matter and the decoding is only done after the codeword is fully received. However, the main advantage of convolutional codes is their suitability for sequential decoding with low delay.
 
In the past little progress has been made in finding good binary convolutional codes and so far optimal binary convolutional codes have only been presented for some special values of the code rate. There are two tabulations of binary convolutional codes with maximal free distance for rates $1/2$, $1/3$, $1/4$, $2/3$ and $3/4$; see \cite{Larsen,Paaske}. Moreover, in \cite{joh}, tables of binary convolutional codes of rates $1/2$ and $2/3$ with optimal column distances are presented.
%. The codes associated with these rates are optimal in the sense of maximizing the  and so far, there is no construction for binary convolutional codes with optimal column distances.

In this paper, a construction of binary convolutional codes with optimal column distances for more general code rates will be presented and for that we focus on maximizing especially the small column distances that are most important for low delay decoding. In order to achieve such optimal constructions, we use a class of punctured simplex (block) codes, which we call partial simplex codes.

The paper is organized into three main parts. Section III provides upper and lower limits for column distances. Section IV presents the construction of binary convolutional codes of rate $1/n$ with optimal column distances and, finally, Section V completes the previous one, extending the presented construction to convolutional codes with dimension $k>1$.

\section{Preliminaries}

%\textcolor{red}{if we have space at the end we should put some text between all the definitions}
%\begin{definition}
%simplex codes $S(q, k)$ are projective, linear $[n, k, q^{k-1}]$- codes over $\mathbb{F}_{q}$, with
%$n = \displaystyle {\frac{{q^{k}-1}}{{q-1}}}$.
%\end{definition}

In this section, we present some definitions and results that are important for the following sections. For more details, we refer to e.g. \cite{LinCostello} or \cite{bookchapter}.
%\textcolor{red}{as a reminder that we should at the end carefully check that our notation is consistent}
\begin{definition}
%[\cite{Bossert}]
%The dual code of the binary Hamming code is the 
A \textbf{simplex code} $\mathcal{S}(k)$ of dimension $k$ is a block code $C=\{u\cdot S(k), u\in\mathbb F_2^k\}$ whose generator matrix $S(k)\in\mathbb F_2^{k\times (2^{k}-1)}$ has
%, i.e., the generator matrix of the simplex code $S(k)$ of dimension $k$ is equal to the parity-check matrix of the Hamming code of dimension $n-k$ and therefore its columns are 
all nonzero vectors in $\mathbb F_2^k$ as columns. 
%Thus, the simplex code has length $n=2^{k}-1$ and minimum distance $d=2^{k-1}=\frac{n+1}{2}$. \textcolor{green}{Order of columns}
\end{definition}

Note that $\mathcal{S}(k)$ is only unique up to column permutations inside the generator matrix leading to an equivalent code.

\begin{comment} 
\begin{example} The generator matrix of the simplex $[7,3,4]$ -code is thus: $$G_{S(3)}=\begin{pmatrix}
0 & 0 & 0 & 1 & 1 & 1 & 1\\
0 & 1 & 1 & 0 & 0 & 1 & 1\\
1 & 0 & 1 & 0 & 1 & 0 & 1
\end{pmatrix}$$
\end{example}
\end{comment}

\begin{proposition}
All nonzero codewords of a $k$-dimensional simplex code of length $n=2^{k}-1$ have weight $2^{k-1}=\frac{n+1}{2}$.
\end{proposition}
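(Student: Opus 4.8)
The plan is to compute directly the Hamming weight of an arbitrary nonzero codeword $c = u\cdot S(k)$ with $u\in\mathbb F_2^k\setminus\{0\}$. First I would observe that the $j$-th coordinate of $c$ equals the standard bilinear form $\langle u, v_j\rangle$ over $\mathbb F_2$, where $v_j$ is the $j$-th column of $S(k)$; by definition of the simplex code, as $j$ ranges over the $2^k-1$ columns, $v_j$ runs exactly over all nonzero vectors of $\mathbb F_2^k$. Hence the weight of $c$ is precisely the number of nonzero $v\in\mathbb F_2^k$ with $\langle u,v\rangle = 1$.

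Next I would count those vectors. Since $u\neq 0$, the linear map $\varphi_u\colon \mathbb F_2^k\to\mathbb F_2$, $v\mapsto\langle u,v\rangle$, is surjective, so its kernel is a hyperplane of cardinality $2^{k-1}$ and the fibre $\varphi_u^{-1}(1)$ has cardinality $2^k-2^{k-1}=2^{k-1}$ as well. Because $\varphi_u(0)=0$, every element of $\varphi_u^{-1}(1)$ is automatically nonzero, so restricting the count to nonzero $v$ changes nothing: there are exactly $2^{k-1}$ columns $v_j$ of $S(k)$ with $\langle u, v_j\rangle = 1$. Therefore $\mathrm{wt}(c)=2^{k-1}$, and since $n=2^k-1$ this equals $\tfrac{n+1}{2}$, independently of which nonzero $u$ was chosen.

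There is essentially no real obstacle here; the proof is a one-line count of solutions of a nonzero linear equation over $\mathbb F_2$. The only point requiring a little care is the bookkeeping that the coordinates of $c$ are indexed by the $2^k-1$ nonzero vectors of $\mathbb F_2^k$ rather than all $2^k$ of them, which is harmless precisely because the zero vector would contribute a zero coordinate. One could alternatively deduce the statement from the fact that the dual of $\mathcal S(k)$ is the $[2^k-1,\,2^k-1-k]$ Hamming code together with a weight-enumerator computation, but the direct functional-counting argument above is shorter and is the one I would present.
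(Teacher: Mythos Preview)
The paper states this proposition without proof, treating it as a standard fact about simplex codes. Your argument is correct and is the standard one: identifying the coordinates of $u\cdot S(k)$ with the values $\langle u,v\rangle$ as $v$ ranges over $\mathbb F_2^k\setminus\{0\}$ and counting the fibre $\varphi_u^{-1}(1)$ of the surjective linear form $\varphi_u$ gives exactly $2^{k-1}$, with the bookkeeping you note (that $0\in\ker\varphi_u$, so excluding it does not affect the count) being the only subtlety.
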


%Define convolutional codes

\begin{definition}
A \textbf{convolutional code} $\mathcal{C}$ of rate $k/n$ is a $\mathbb{F}_{q}[z]$-submodule of $\mathbb{F}_{q}[z]^n$ of rank $k$, where $\mathbb{F}_{q}[z]$ is the ring of polynomials with coefficients in the field $\mathbb{F}_{q}$. A matrix $G(z)\in \mathbb{F}_{q}[z]^{k \times n}$ whose rows constitute a basis of $\mathcal{C}$ is called a \textbf{generator matrix} for $\mathcal{C}$, i.e.:
%\vspace{-2mm}
{\begin{eqnarray} 
\mathcal{C} &\hspace{-1.5mm}
%=\hspace{-1.5mm}&\ \text{Im}_{\mathbb{F}_{q}[z]} \ G(z)\nonumber\\ &\hspace{-1.5mm}
=\hspace{-1.5mm}& \{{v(z) \in \mathbb{F}_{q}[z]^{n}: v(z) = u(z)G(z) \text{ with } u(z) \in \mathbb{F}_{q}[z]^{k}\}.}\nonumber
\end{eqnarray}}
\end{definition}

\begin{definition}
Let $G(z)=\sum_{i=0}^{\mu}G_i z^{i}\in\mathbb F_q[z]^{k\times n}$ with $G_{\mu} \neq 0$ and $k\leq n$. For each $i$, $1\leq i\leq k$, the $i$-th \textbf{row degree} $\nu_i$ of $G(z)$ is defined as the largest degree of any entry in row $i$ of $G(z)$, in particular $\mu=\max_{i=1,\hdots,k}\nu_i$. The \textbf{external degree} of $G(z)$ is the sum of the row degrees of $G(z)$. The \textbf{internal degree} of $G(z)$ is the maximal degree of the  $k\times k$ minors of $G(z)$.
\end{definition}

\begin{definition}
A matrix $G(z)\in \mathbb F_q[z]^{k\times n}$ is said to be
\textbf{row reduced}\index{row reduced} if its internal and external degrees are equal. In this case, $G(z)$ is called a minimal generator matrix of the convolutional code it generates. The \textbf{degree} $\delta$ of a code $\mathcal{C}$ is
the external degree of a minimal generator matrix of $\mathcal{C}$. A convolutional code with rate $k/n$ and degree $\delta$ is called an $(n,k,\delta)$ convolutional code.
\end{definition}

\begin{comment}
    \textcolor{red}{but there is something wrong with the definition of generic row degrees here, because if $k$ divides $\delta$, then all row degrees are the same but with the definition we have here this does not happen. I still need to check how to correct it but I better do other things first}
\end{comment}

\begin{definition}
$G(z)\in\mathbb F_q[z]^{k\times n}$ is said to have \textbf{generic row degrees}
%are non-increasing row degrees with values $\left\lfloor \frac{\delta}{k} \right\rfloor$ and $\left\lfloor \frac{\delta}{k} \right\rfloor +1 $. We can write $G(z)=\sum_{i=0}^{\mu}G_iz^{i}$ with $G_{\mu}\neq 0$ and $\nu:=\left\lfloor \frac{\delta}{k} \right\rfloor + 1
%$, i.e. $\nu=\mu$ if $k\nmid \delta$ and $\nu=\mu+1$ if $k\mid\delta$ and we can call $\mu$ memory of $G(z)$.
if $\nu_1=\hdots=\nu_t=\lceil\frac{\delta}{k}\rceil$ and $\nu_t=\hdots=\nu_k=\lfloor\frac{\delta}{k}\rfloor$ for $t=\delta+k-k\lceil\frac{\delta}{k}\rceil$.
\end{definition}

\begin{definition}
A generator matrix $G(z)\in {\mathbb{F}_{q}[z]}^{k\times n}$ with $G_0=G(0)$ full (row) rank is called \textbf{delay-free}. 
\end{definition} 

\begin{definition}
The (Hamming) weight
%$wt(v)$ of $v\in\mathbb F_q^n$ is
%defined as the number of nonzero components of $v$ and the weight 
of a polynomial vector $v(z)=\sum_{t=0}^{\deg(v(z))}v_tz^t \in \mathbb{F}_q[z]^n$ is defined as $wt(v(z))=\sum_{t=0}^{\deg(v(z))}wt(v_t)$, where $wt(v_t)$ is the weight of $v_t\in\mathbb F_q^n$. %The (Hamming) distance between $v_1, v_2\in\mathbb F_q^n$ is defined as $d(v_1,v_2)=wt(v_1-v_2)$ and correspondingly the distance between $v_1(z), v_2(z)\in \mathbb{F}_q[z]^n$ is defined as $d(v_1(z),v_2(z))=wt(v_1(z)-v_2(z))$.
\end{definition}

\begin{definition}
{\em
The \textbf{free distance}\index{code!minimum distance}\index{distance!free} of a convolutional code
  $\mathcal{C}$ is given by
  \[d_{free}(\mathcal{C}):=\min_{v(z)\in\mathcal{C}}\left\{wt(v(z))\
    |\ v(z) \neq 0\right\}.\]
    }
\end{definition}

\begin{definition}
Let $G(z)=\sum_{i=0}^{\mu}G_iz^i\in \mathbb{F}_q[z]^{k\times n}$ be a generator matrix of a convolutional code $\mathcal{C}$. For $j\in\mathbb N_0$, define the \textbf{truncated sliding
generator matrices} as
\begin{align*}
  G_j^c:=\arraycolsep=5pt
  \left[
  \begin{array}{ccc}
    \hspace{-1.5mm} G_0 \hspace{-1.5mm} & \hspace{-1.5mm}  \hspace{-1.5mm} \cdots \hspace{-1.5mm}& \hspace{-1.5mm} G_j \hspace{-1.5mm}\\
       % \hspace{-1.5mm} & \hspace{-1.5mm} G_0 %\hspace{-1.5mm} & \hspace{-1.5mm} \cdots %\hspace{-1.5mm} & \hspace{-1.5mm} %G_{j-1}\hspace{-1.5mm}\\
         & \hspace{-1.5mm} \ddots \hspace{-1.5mm} \hspace{-1.5mm}& \hspace{-1.5mm} \vdots \hspace{-1.5mm} \hspace{-1.5mm}\\
       & \hspace{-1.5mm} & \hspace{-1.5mm} G_0 \hspace{-1.5mm}
  \end{array}
              \right]\in \mathbb F_q^{(j+1)k\times (j+1)n}
              \end{align*}
where we set $G_i=0$ for $i>\mu$.
\end{definition}

%column distances
\begin{definition}
%For a polynomial vector ${v}(z)=\sum_{j\in\mathbb{N}_0}{v}_jz^j \in \mathbb{F}_q^n[z]$, the \textbf{j-th truncation}\index{truncation} of ${v}(z)$ is defined as
%\[
%{v}_{[0,j]}(z)={v}_0+{v}_1z+\cdots+{v}_jz^j.
%\]
For $j\in\mathbb N_0$, the \textbf{j-th column distance}\index{column distance} of a convolutional code $\mathcal{C}$ is defined as
\[
d_j^c(\mathcal{C}):=\min\left\{wt(v_0,\hdots,v_j)\ |\ {v}(z)\in\mathcal{C} \text{ and }{v}_0 \neq 0\right\}.
\]
\end{definition}

Since the convolutional codes which we will construct in this paper will all be delay-free, we can use that in this case 
\[
\small
d_j^c(\mathcal{C})=\min\left\{wt(u_0,\hdots,u_j)G_j^c\ |\ u(z)G(z)\in\mathcal{C} \text{ and }{u}_0 \neq 0\right\}
\]
\normalsize

\begin{definition}\label{Def:Catas-and-Non-catas}
Let $\mathcal{C}$ be an $(n,k)$ convolutional code over $\mathbb{F}_q$. A full row rank matrix $H(z) \in \mathbb{F}_q[z]^{(n-k)\times n}$ satisfying
\[
\mathcal{C} = \ker H(z) = \{v(z) \in \mathbb{F}_q[z]^n \, : \, H(z)v(z)^\top = 0\}
\]
is called a \textbf{parity-check matrix} of $\mathcal{C}$. If such a matrix exists, $\mathcal{C}$ is called \textbf{non-catastrophic}, otherwise it is called \textbf{catastrophic}.
\end{definition}

A code is non-catastrophic if and only if $G(z)$ is left prime which is equivalent to $G(z)$ having full row rank for all elements from the closure $z\in\overline{\mathbb F}_q$ \cite{bookchapter}. This implies that each non-catastrophic convolutional code is delay-free.
%\textcolor{green}{change}
Moreover, if $\mathcal{C}$ is non-catastrophic, $d_{free}(\mathcal{C})= \lim_{j\rightarrow\infty} d_j^c (\mathcal{C})$, what can be used to calculate the free distance of the constructions and examples in the following sections.
\begin{theorem}[{\cite{ro99a1}, \cite{gl06}}]\label{th2} Let $\mathcal{C}$ be an $(n,k,\delta)$ convolutional code. Then,
\begin{itemize}
\item[(i)] $d_{free}(\mathcal{C})\leq(n-k)\left(\left\lfloor
    \frac{\delta}{k}\right\rfloor+1\right)+\delta+1$ 
\item[(ii)] $d_j^c (\mathcal{C}) \leq (n-k)(j + 1) + 1$ for all
    $j\in\mathbb N_0$
\end{itemize}
\label{ub}
\end{theorem}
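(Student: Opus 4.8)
The plan is to prove both bounds by exhibiting, for each, a low-weight codeword coming from a single row of a suitably normalized generator matrix, and then arguing no codeword can do better. For part (ii), I would first pick a minimal (row reduced, delay-free) generator matrix $G(z)=\sum_{i=0}^{\mu}G_iz^i$ for $\mathcal{C}$. Fix $j\in\mathbb{N}_0$. The idea is to choose an input $u(z)=u_0+u_1z+\cdots$ with $u_0\neq 0$ so that $(v_0,\dots,v_j)=(u_0,\dots,u_j)G_j^c$ has small weight, and to count the number of nonzero symbols forced by the full-row-rank (delay-free) condition on $G_0$. Concretely, since $G_0$ has full row rank $k$, we may (after a change of basis of $\mathcal{C}$, i.e. multiplying $G(z)$ on the left by an invertible constant matrix, which changes neither $\mathcal{C}$ nor the column distances) assume $G_0=[\,I_k \mid *\,]$. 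Take $u_0=e_1$ and all later $u_t$ chosen greedily to cancel as much of the ``future'' as possible. The key step is a dimension/parity-check count: the map $(u_0,\dots,u_j)\mapsto(v_0,\dots,v_j)=(u_0,\dots,u_j)G_j^c$ has domain dimension $(j+1)k$ and the image lies in $\mathbb{F}_q^{(j+1)n}$, but the constraint $u_0\neq 0$ together with the structure of $G_j^c$ (block upper triangular with $G_0$ on the diagonal) forces at least ... positions to be nonzero. The cleanest way: restrict to inputs of the form $u(z)=e_1 + u_1 z + \cdots + u_j z^j$; the resulting $(v_0,\dots,v_j)$ ranges over a coset of the row space of the submatrix of $G_j^c$ obtained by deleting the first row, and one shows this coset avoids $0$, indeed every element has weight at least $(n-k)(j+1)+1$, because collapsing the $k-1$ ``free'' coordinates of each $v_t$ still leaves $n-k$ coordinates per block whose values, after accounting for the $I_k$ block, cannot all vanish simultaneously across all $j+1$ blocks unless the whole input after $u_0$ is zero — and even then $wt(v_0)\geq wt(\text{row }1\text{ of }G_0)\geq$ something; the truly tight count is that one can always reach weight exactly $(n-k)(j+1)+1$ with the greedy choice but never less. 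I would make this precise by the standard argument: the $j$-th column distance equals the minimum weight of the first $n$ coordinates' worth of a path of length $j+1$ in the state diagram starting from a nonzero branch, and a nonzero branch contributes at least one symbol while each of the $j+1$ blocks contributes at least $n-k$ ``parity'' symbols that cannot be zeroed out simultaneously.

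For part (i), I would derive it as a limiting / truncated consequence of (ii) combined with the row-degree structure. Since $\mathcal{C}$ is (without loss of generality) non-catastrophic when we only want an upper bound — or more carefully, since any $(n,k,\delta)$ code has a minimal generator matrix with generic row degrees whose largest row degree is $\lceil\delta/k\rceil$ — pick the row of $G(z)$ of smallest row degree, which is $\lfloor\delta/k\rfloor$. This single row, call it $g(z)$, is itself a codeword (take $u(z)=e_i$), and $v(z)=g(z)$ has $\deg v(z)=\lfloor\delta/k\rfloor$, hence its weight is the sum of the weights of $\lfloor\delta/k\rfloor+1$ vectors in $\mathbb{F}_q^n$. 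But we need a bound of the form $(n-k)(\lfloor\delta/k\rfloor+1)+\delta+1$, which is larger than $n(\lfloor\delta/k\rfloor+1)$ would be, so the naive single-row bound is too weak; instead the right approach is: apply (ii) at $j=\lfloor\delta/k\rfloor$ to get $d_{free}\leq d_{\lfloor\delta/k\rfloor}^c\leq (n-k)(\lfloor\delta/k\rfloor+1)+1$, which is already stronger than (i) in the $(n-k)(\lfloor\delta/k\rfloor+1)$ term but lacks the $+\delta$. Hence (i) must come from a genuinely different argument: the generalized Singleton-type bound. The standard proof considers the truncated sliding matrix $G_{\lfloor\delta/k\rfloor}^c$, which is $(\lfloor\delta/k\rfloor+1)k \times (\lfloor\delta/k\rfloor+1)n$, and observes that a low-weight free-distance codeword can be taken with degree at most $\lfloor\delta/k\rfloor + (\text{something from the remaining }\delta-k\lfloor\delta/k\rfloor\text{ rows})$; the Singleton-like counting on the appropriate truncated code of length $(\lfloor\delta/k\rfloor+1)n$ and a dimension that is $(\lfloor\delta/k\rfloor+1)k$ minus corrections from the row degrees yields exactly $(n-k)(\lfloor\delta/k\rfloor+1)+\delta+1$.

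Given that the paper cites \cite{ro99a1} and \cite{gl06} for this theorem, I expect the intended ``proof'' here is simply to \emph{invoke those references} rather than reprove it; a self-contained proof of (i) is the generalized Singleton bound of Rosenthal–Smarandache and its proof is nontrivial. So my actual recommendation for the write-up: state that (i) is the generalized Singleton bound proven in \cite{ro99a1} and (ii) its column-distance analogue from \cite{gl06}, and include a short proof only of (ii) along the lines above, since it is the one used constantly in the sequel. The main obstacle, if a genuine proof is wanted, is part (i): getting the extra ``$+\delta$'' requires carefully exploiting the row-degree distribution of a minimal generator matrix (not just the internal degree), via a counting argument on minors of the truncated sliding matrix — the subtlety is that different rows have different degrees, so the ``effective length'' available for the Singleton bound is $n(\mu+1)$ but the ``effective dimension'' is smaller than $k(\mu+1)$ by the degree defects, and bookkeeping those defects exactly is where all the work lies. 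Part (ii) by contrast only needs the delay-free property and a one-line count: a nonzero codeword with $v_0\neq 0$ satisfies $wt(v_i)\geq n-k$ for... no — rather, the union of the $j+1$ blocks, viewed as a word in the truncated code $\ker(\text{parity-check of length }(j+1)n)$, has length $(j+1)n$ and co-dimension $(j+1)(n-k)$ as long as we stay within the first $j+1$ time steps, and a Singleton bound on that truncated block code gives weight $\geq (j+1)n - (j+1)k + 1 = (n-k)(j+1)+1$, provided the truncated code is MDS-comparable — which it is because the delay-free condition makes the relevant submatrix of $G_j^c$ full rank. I would present (ii) in exactly that compact form.
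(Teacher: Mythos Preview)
The paper does not prove this theorem at all; it is stated with attribution to \cite{ro99a1} and \cite{gl06} and immediately followed by commentary, not a proof. You correctly anticipate this and your ultimate recommendation---simply cite the references---matches exactly what the paper does.

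That said, the ``compact form'' you settle on for (ii) has a genuine gap. Applying the Singleton bound to the block code $\mathcal{C}_j$ generated by $G_j^c$ does \emph{not} bound $d_j^c$: a message $(u_0,\dots,u_j)$ with $u_0=\cdots=u_{i-1}=0$ and $u_i\neq 0$ produces a codeword whose first $in$ coordinates vanish and whose weight equals that of $(u_i,\dots,u_j)G_{j-i}^c$, so $d_{\min}(\mathcal{C}_j)=\min_{0\le i\le j}d_i^c=d_0^c$, and Singleton on $\mathcal{C}_j$ only gives the trivial inequality $d_0^c\le(n-k)(j+1)+1$. The correct argument is precisely the greedy one you mention first and then abandon: after normalizing $G_0=[I_k\ \ast]$, take $u_0=e_1$ so that $wt(v_0)\le n-k+1$, and for each $t\ge 1$ choose $u_t$ to cancel the first $k$ coordinates of $v_t=u_0G_t+\cdots+u_tG_0$, forcing $wt(v_t)\le n-k$; summing over $t=0,\dots,j$ gives exactly $(n-k)(j+1)+1$. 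This is the proof in \cite{gl06}. For (i) your assessment is accurate: it is the generalized Singleton bound of \cite{ro99a1}, whose proof requires the row-degree bookkeeping you allude to and is not reproduced in the paper.
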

\begin{comment}
    \textcolor{red}{the next paragraph needs to be rewritten because the Singleton bound is not the bound for the column distances. I mean somehow it is but it reads a bit confusing as it is now. And we say something that MDP is optimal column distances but we need large fields for it and cannot have it over binaries}
\end{comment}

The bound in (i) of Theorem \ref{th2} is called \textbf{generalized Singleton bound}. The fact that $d_j^c(\mathcal{C})\leq d_{free}(\mathcal{C})$ for all $j\in\mathbb N_0$ implies
$d_j^c (\mathcal{C})
\leq(n-k)\left(\left\lfloor\frac{\delta}{k}\right\rfloor+1\right)+\delta+1$
for all $j\in\mathbb N_0$. Hence
$j=L:=\left\lfloor\frac{\delta}{k}\right\rfloor+\left\lfloor\frac{\delta}{n-k}\right\rfloor$
is the largest possible value of $j$ for which $d_j^c(\mathcal{C})$ can
attain the upper bound in (ii) in Theorem \ref{th2}. Moreover, the next lemma shows that if $d_j^c(\mathcal{C})$ is maximal, then the same holds for $d_i^c(\mathcal{C})$ for all $i\leq j$. 

\begin{lemma}[\cite{gl06}]
Let $\mathcal{C}$ be an $(n,k,\delta)$ convolutional code.  If $d_j^c(\mathcal{C})=(n-k)(j+1)+1$ for some
  $j\in\{1,\hdots,L\}$, then  $d_i^c(\mathcal{C})=(n-k)(i+1)+1$ for all
  $i\leq j$.
\label{maxim}
\end{lemma}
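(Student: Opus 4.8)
The plan is to reduce the statement to the single-step estimate
\[
d_m^c(\mathcal{C})\ \le\ d_{m-1}^c(\mathcal{C})+(n-k)\qquad\text{for every }m\ge 1 .
\]
Granting this, the lemma follows quickly. Fix $i\le j$. If $v(z)\in\mathcal{C}$ with $v_0\ne 0$ realizes $d_j^c(\mathcal{C})$, then $wt(v_0,\dots,v_i)\le wt(v_0,\dots,v_j)=d_j^c(\mathcal{C})<\infty$, so every $d_\ell^c(\mathcal{C})$ with $\ell\le j$ is finite and attained by some codeword with nonzero zeroth coefficient. Iterating the single-step estimate from $i$ up to $j$ gives $d_j^c(\mathcal{C})\le d_i^c(\mathcal{C})+(n-k)(j-i)$, hence
\[
d_i^c(\mathcal{C})\ \ge\ d_j^c(\mathcal{C})-(n-k)(j-i)\ =\ (n-k)(j+1)+1-(n-k)(j-i)\ =\ (n-k)(i+1)+1 ,
\]
while $d_i^c(\mathcal{C})\le (n-k)(i+1)+1$ by Theorem \ref{th2}(ii). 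Therefore $d_i^c(\mathcal{C})=(n-k)(i+1)+1$ for all $i\le j$.

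To prove the single-step estimate, I would fix a delay-free generator matrix $G(z)=\sum_{i}G_iz^i$ of $\mathcal{C}$ and use the formula $d_m^c(\mathcal{C})=\min\{\,wt\big((u_0,\dots,u_m)G_m^c\big):u_0\ne 0\,\}$. Pick $(u_0,\dots,u_{m-1})$ with $u_0\ne 0$ attaining $d_{m-1}^c(\mathcal{C})$, and regard it as the first $m$ blocks of an input $(u_0,\dots,u_{m-1},u_m)$ whose last block $u_m\in\mathbb{F}_q^k$ is still free. Since $G_m^c$ is block upper-triangular and Toeplitz, the first $m$ coordinate blocks of $(u_0,\dots,u_m)G_m^c$ do not depend on $u_m$ and equal the blocks of $(u_0,\dots,u_{m-1})G_{m-1}^c$, while the last block equals $c+u_mG_0$ with $c:=\sum_{\ell=1}^{m}u_{m-\ell}G_\ell\in\mathbb{F}_q^n$ fixed. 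As $u_m$ runs over $\mathbb{F}_q^k$ the vector $u_mG_0$ runs over the row space $\mathcal{R}$ of $G_0$; delay-freeness gives that $\mathcal{R}$ is an $[n,k]$ block code. The smallest weight occurring in the coset $c+\mathcal{R}$ is at most the covering radius of $\mathcal{R}$, which is at most $n-k$: bringing a generator matrix of $\mathcal{R}$ into the form $[\,I_k\mid A\,]$ (after a suitable column permutation) shows that every coset of $\mathcal{R}$ has a representative supported in the last $n-k$ coordinates. Choosing $u_m$ accordingly, $wt(c+u_mG_0)\le n-k$, and therefore
\[
d_m^c(\mathcal{C})\ \le\ wt\big((u_0,\dots,u_m)G_m^c\big)\ =\ d_{m-1}^c(\mathcal{C})+wt(c+u_mG_0)\ \le\ d_{m-1}^c(\mathcal{C})+(n-k) .
\]

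The only substantive step is the single-step estimate; everything else is bookkeeping. Within it, the two points to handle with care are that appending $u_m$ leaves the first $m$ blocks untouched — this is exactly the block-Toeplitz shape of $G_m^c$ — and that the newly created last block can be forced to have weight at most $n-k$; the latter is precisely the redundancy bound on the covering radius of the block code $\operatorname{rowsp}(G_0)$, and it is here that delay-freeness, i.e.\ $\operatorname{rank}G_0=k$, is used. (Alternatively, one may invoke the known description of the codes attaining $d_j^c(\mathcal{C})=(n-k)(j+1)+1$ through nonvanishing of those full-size minors of $G_j^c$ that are not forced to be zero; since $G_{j-1}^c$ sits inside $G_j^c$ as its top-left $jk\times jn$ submatrix, the condition for index $j$ restricts to the condition for index $j-1$, and the lemma follows.)
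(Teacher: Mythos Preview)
The paper does not prove this lemma; it is quoted from \cite{gl06}. Your argument is correct (the delay-freeness you invoke is implicit in this context and is already needed for Theorem~\ref{th2}(ii)): the single-step bound $d_m^c\le d_{m-1}^c+(n-k)$ follows cleanly from the covering-radius estimate for the $[n,k]$ block code spanned by the rows of $G_0$, and telescoping against Theorem~\ref{th2}(ii) gives the result. This is not the route taken in \cite{gl06}: there one first proves that $d_j^c=(n-k)(j+1)+1$ is equivalent to the non-vanishing of every full-size minor of $G_j^c$ (equivalently, of the sliding parity-check matrix) that is not forced to be zero by the block-triangular shape, and then notes that this minor condition on $G_j^c$ restricts to the same condition on its top-left block $G_{j-1}^c$ --- exactly the alternative you sketch in your final parenthetical. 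The minor-based proof has the advantage of simultaneously yielding the algebraic MDP criterion used throughout \cite{gl06}; your covering-radius proof is more self-contained and does not require that characterization.
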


This leads to the following definition.

\begin{definition}[\cite{gl06}]
An $(n,k,\delta)$ convolutional code $\mathcal{C}$ is said to be \textbf{maximum distance profile (MDP)}\index{maximum distance profile (MDP) code} if 
 \[
 d_j^c(\mathcal{C})=(n-k)(j+1)+1\ \text{for} \ j=L=\left\lfloor\frac{\delta}{k}\right\rfloor+\left\lfloor\frac{\delta}{n-k}\right\rfloor.
\]
\end{definition}

It is known that for the existence of MDP codes the size of the underlying finite field has to be sufficiently large (see e.g. \cite{hutch,necfs}), i.e. we cannot construct MDP codes over the binary field. In the following, we investigate upper bounds on the column distances that can be achieved by binary convolutional codes and how to obtain constructions for binary convolutional codes with optimal column distances.
%\textcolor{red}{cite something for field size of MDP codes and maybe shorten the previous section as well as the definition of Hamming weight and Hamming distance}

\section{Upper and lower bounds for column distances}

In this section, we present some bounds on the column distances of convolutional codes that will be helpful to show that the constructions we will present in the following sections are optimal convolutional codes.

\begin{lemma}\label{bounds}
Let $\mathcal{C}$ be an $(n,k,\delta)$ convolutional code with generator matrix $G(z)=\sum_{i=0}^{\mu}G_i z^{i}\in\mathbb F_q[z]^{k\times n}$ with $G_{\mu} \neq 0$. Denote by $wt_r(G_i)$ the weight of row $r$ of $G_i$. Then, 
\begin{equation}\label{main1}
\small
\begin{aligned}
  \sum_{i=0}^j \min_{u_0\neq 0} wt\left((u_0\ \hspace{-1.5mm} \cdots\ \hspace{-1.5mm} u_i)\begin{pmatrix}
    G_i\vspace{-1.5mm}\\ \vdots\\ \vspace{-1.5mm}G_0 \end{pmatrix}\right)\hspace{-0.5mm}  \leq d_j^c(\mathcal{C})
  \leq \hspace{-1.5mm} \min_{r\in\{1,\hdots,k\}}\hspace{-1.5mm}\sum_{i=0}^{\min(j,\delta)}\hspace{-1.5mm}wt_r(G_i) 
\end{aligned}
\normalsize
\end{equation}
\begin{comment}
    \begin{equation}\label{main1}
\begin{aligned}
    \min_{r\in\{1,\hdots,k\}}\sum_{i=0}^{\min(j,\delta)}wt_r(G_i) & \geq d_j^c(\mathcal{C})\\
         & \geq \sum_{i=0}^j \min_{u_0\neq 0} wt\left((u_0\ \cdots\ u_i)\begin{pmatrix}
    G_i\\ \vdots\\ G_0 %\nonumber
\end{pmatrix}\right)
\end{aligned}
\end{equation}
\end{comment}
\begin{equation}\label{main2}
  \text{and}\quad \min_{r\in\{1,\hdots,k\}}\hspace{-1.5mm}\sum_{i=0}^{\min(j,\delta)}\hspace{-1.5mm}wt_r(G_i) \leq n((\min(j,\delta)+1).  \end{equation}
\end{lemma}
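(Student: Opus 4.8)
The plan is to prove the three inequalities in Lemma \ref{bounds} separately, the last one (\ref{main2}) being essentially trivial and the two in (\ref{main1}) following from the definition of column distance together with a careful look at which vectors $u(z)$ compete in the minimum.

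For the \textbf{lower bound} in (\ref{main1}), I would start from the delay-free characterization $d_j^c(\mathcal{C})=\min\{wt((u_0,\hdots,u_j)G_j^c)\ :\ u_0\neq 0\}$ stated just after the column-distance definition (note $\mathcal{C}$ is delay-free since $G_\mu\neq 0$ and $G(z)$ is a generator matrix; if one wants to be careful one should add the hypothesis that $G_0$ has full row rank, but this is implicit in the setup). Write out the block structure of $(u_0,\hdots,u_j)G_j^c$: its $i$-th block of $n$ coordinates equals $\sum_{\ell=0}^{i}u_\ell G_{i-\ell}=(u_0\ \cdots\ u_i)\,(G_i;\ \cdots;\ G_0)^{\!\top\text{-ish}}$ in the notation of the lemma. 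Hence $wt((u_0,\hdots,u_j)G_j^c)=\sum_{i=0}^{j} wt\big((u_0\ \cdots\ u_i)(G_i;\hdots;G_0)\big)$, and since $u_0\neq 0$ is the only constraint on $u$, each summand is at least $\min_{u_0\neq 0}wt((u_0\ \cdots\ u_i)(G_i;\hdots;G_0))$ — crucially the same vector $(u_0,\hdots,u_j)$ is used in all blocks, but each term is bounded below by its own unconstrained-except-$u_0$ minimum, so the sum of the minima is a valid lower bound for every admissible $u$, hence for the minimum over such $u$. This gives the left inequality.

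For the \textbf{upper bound} in (\ref{main1}), I would exhibit, for each fixed $r\in\{1,\hdots,k\}$, a specific codeword: take $u(z)=e_r$, the $r$-th standard basis vector (a constant polynomial), so $u_0=e_r\neq 0$ and $u_i=0$ for $i\geq 1$. Then $v(z)=e_r G(z)$ is the $r$-th row of $G(z)$, and $wt(v_0,\hdots,v_j)=\sum_{i=0}^{\min(j,\mu)}wt_r(G_i)$. Since $d_j^c$ is a minimum over all such codewords, $d_j^c(\mathcal{C})\leq \sum_{i=0}^{\min(j,\mu)}wt_r(G_i)$ for every $r$, hence $d_j^c(\mathcal{C})\leq \min_r \sum_{i=0}^{\min(j,\mu)}wt_r(G_i)$. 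The only subtlety is the upper limit of summation: the lemma writes $\min(j,\delta)$ rather than $\min(j,\mu)$, so I would note that a minimal (row-reduced) generator matrix can be chosen with $\mu\leq\delta$ — in fact $\mu=\max_i\nu_i\leq\sum_i\nu_i=\delta$ — and that terms with $\mu<i\leq\min(j,\delta)$ contribute $wt_r(G_i)=0$, so replacing $\mu$ by $\delta$ changes nothing; alternatively one simply assumes $G(z)$ is the minimal generator matrix from the $(n,k,\delta)$ data, which is the natural reading.

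For (\ref{main2}), each $G_i\in\mathbb F_q^{k\times n}$ has rows of weight at most $n$, so $wt_r(G_i)\leq n$ and $\sum_{i=0}^{\min(j,\delta)}wt_r(G_i)\leq n(\min(j,\delta)+1)$ for every $r$, a fortiori for the minimizing $r$. I do not anticipate a genuine obstacle anywhere; the one place demanding care is making the indexing in the upper bound ($\mu$ versus $\delta$, and whether $G(z)$ is assumed minimal/row reduced) consistent with the statement, and making explicit that the left inequality uses the sub-additivity of weight across the block decomposition of $G_j^c$ together with the fact that bounding each block-term by its own minimum only weakens the bound. I would state the delay-free/full-rank assumption on $G_0$ explicitly at the start of the proof so that the characterization of $d_j^c$ in terms of $G_j^c$ applies.
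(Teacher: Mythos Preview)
Your proposal is correct and follows essentially the same route as the paper: expand $d_j^c(\mathcal{C})$ as $\min_{u_0\neq 0}\sum_{i=0}^j wt\big((u_0\ \cdots\ u_i)(G_i;\hdots;G_0)\big)$, bound each block by its own minimum for the lower bound, and plug in $u(z)=e_r$ (equivalently, a row of $G_j^c$) for the upper bound, with \eqref{main2} trivial. Your discussion of the $\mu$ versus $\delta$ index and the implicit delay-free assumption on $G_0$ is more explicit than the paper's own proof, but the underlying argument is the same.
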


\begin{proof}
\eqref{main2} is obvious. For \eqref{main1} recall that by definition 
\begin{equation}\label{main}
\small
\begin{aligned}
    d_j^c(\mathcal{C})
    %& = \min_{u_0\neq 0} wt\left((u_0\ \cdots\ %u_j)\begin{pmatrix}
 %  G_0\vspace{-1.5mm} & \cdots & G_j\\ & \ddots & %\vdots \\ \vspace{-1.5mm}0 & & G_0
%\end{pmatrix}\right)\\
       %  & 
         =\min_{u_0\neq 0}\sum_{i=0}^j wt\left((u_0\ \cdots\ u_i)\begin{pmatrix}
    G_i\vspace{-1.5mm}\\ \vdots\\ \vspace{-1.5mm}G_0 \nonumber
\end{pmatrix}\right).
\end{aligned}
\end{equation}
\normalsize From this the lower bound on $d_j^c(\mathcal{C})$ is clear. The upper bound follows as $d_j^c(\mathcal{C})$ is upper bounded by the weight of any of the first $k$ rows of $G_j^c$.
\end{proof}

These bounds are valid over any finite field, however from now on we will always be referring to the field $\mathbb{F}_2$. Later, we will see that these bounds can be reached with binary convolutional codes (in contrast to the bounds of Theorem \ref{th2}).

In the following sections, we will construct binary convolutional codes which are optimal in the following sense:

\begin{definition}\label{defopt}
   We say that a binary $(n,k,\delta)$ convolutional code $\mathcal{C}$ has \textbf{optimal column distances} if there exists no binary $(n,k,\delta)$ convolutional code $\hat{\mathcal{C}}$ such that $d^c_j(\hat{\mathcal{C}})>d^c_j(\mathcal{C})$ for some $j\in\mathbb N_0$ and $d^c_i(\hat{\mathcal{C}})=d^c_i(\mathcal{C})$ for all $0\leq i<j$.
\end{definition}

\begin{comment}
For 
$$d_i:=\begin{cases} \text{$d_{min}$ of block code with generator matrix} \begin{pmatrix}G_0\\ \vdots\\G_i\end{pmatrix} & \text{for}\ (i+1)k\leq n\\
0 & \text{for}\ (i+1)k>n
\end{cases}$$ 
\end{comment}

\section{Construction of rate $1/n$ binary convolutional codes with optimal column distances}

%The aim of this section is the construction of binary $(n,1,\delta)$ convolutional codes with optimal column distances. %A binary convolutional code of rate $1/n$ and degree $\delta$ is defined by its generator matrix $G(z)=\sum_{i=0}^{\mu}G_iz^{i}\in\mathbb F_q[z]^{1\times n}$. 

%In order to do this, 
Firstly, we need to maximize $d_0^c$, i.e. we have to choose $G_0=(1\ \cdots\ 1)$.
The idea of the construction is to start with the generator matrix of a simplex code but only take the columns whose first entry is equal to 1 and set the resulting matrix equal to $\begin{pmatrix}
   G_0^\top \hspace{-1.5mm} & \hspace{-1.5mm} \cdots & \hspace{-1.5mm}  G_{\delta}^\top
\end{pmatrix}^\top$.

\begin{comment}

\begin{example}
For $n=3$, the corresponding simplex code has generator matrix $\begin{pmatrix}
    1 \hspace{-1.5mm} & \hspace{-1.5mm} 1 \hspace{-1.5mm} & \hspace{-1.5mm}0\hspace{-1.5mm} \\  1 \hspace{-1.5mm} & \hspace{-1.5mm} 0 \hspace{-1.5mm} & \hspace{-1.5mm} 1 
\end{pmatrix}$ and we obtain from this the $(2,1,1)$ convolutional code $\mathcal{C}$ with $G_0=(1\ 1)$ and $G_1=(1\ 0)$. One has $d_0^c(\mathcal{C})=2$, $d_1^c(\mathcal{C})=d_2^c(\mathcal{C})=\cdots=d_{free}(\mathcal{C})=3$. Note that here the lower and upper bound of \eqref{main1} are sharp for all $j$. Moreover, \eqref{main2} shows us that this is an optimal binary code for these parameters as the code with $G_0=G_1=(1\ 1)$ has $d_1^c=d_2^c=\cdots=2$.
\end{example}

In the following, we want to generalize the idea presented in the preceding example.
    
\end{comment}

\begin{definition}\label{ps}
Take a generator matrix $S(\delta+1)$ of a simplex code 
%of a Simple($the generator matrix of a $[2^{\delta+1}-1,\delta+1,2^{\delta}]$ simplex code (rate $\frac{\delta+1}{2^{\delta+1}}$ and minimum distance $2^{\delta}$) 
and remove the columns with first entry equal to zero and define the resulting matrix as $S(\delta+1)_1\in\mathbb F_2^{(\delta+1)\times 2^{\delta}}$. For $m\in\mathbb N$, we call the (block) code with generator matrix $S(\delta+1)_1^m:=[S(\delta+1)_1\ \cdots\ S(\delta+1)_1]\in\mathbb F_2^{(\delta+1)\times m\cdot 2^{\delta}}$, an \textbf{$m$-fold partial simplex code} $\mathcal{S}(\delta+1)_1^m$ of dimension $\delta+1$. If $m=1$, we also just speak of partial simplex codes.
\end{definition}

\begin{proposition}
All codewords of $\mathcal{S}(\delta+1)_1^m$ except $(1 \cdots 1)\in\mathbb F_2^{m\cdot 2^{\delta}}$ 
%of an $m$-fold partial simplex code of dimension $\delta+1$ 
have weight $m\cdot 2^{\delta-1}$. In particular, the minimum distance of such a code is equal to $m\cdot 2^{\delta-1}$.
\end{proposition}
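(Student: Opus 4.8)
The plan is to reduce the weight computation for the $m$-fold partial simplex code to the $m=1$ case and then analyze that directly from the structure of the simplex code. First I would observe that since the generator matrix $S(\delta+1)_1^m = [S(\delta+1)_1\ \cdots\ S(\delta+1)_1]$ is the $m$-fold horizontal concatenation of $S(\delta+1)_1$, every codeword of $\mathcal{S}(\delta+1)_1^m$ is simply $m$ copies of a codeword $u\cdot S(\delta+1)_1$ of the (single) partial simplex code, so $wt(u\cdot S(\delta+1)_1^m) = m\cdot wt(u\cdot S(\delta+1)_1)$. Hence it suffices to show that every nonzero $u\in\mathbb F_2^{\delta+1}$ with $u\cdot S(\delta+1)_1\neq(1\cdots 1)$ yields a codeword of weight exactly $2^{\delta-1}$, and that $u\cdot S(\delta+1)_1=(1\cdots 1)$ occurs for exactly one $u$ (namely $u=(1,0,\dots,0)$, since $G_0=(1\cdots1)$ is the first row of $S(\delta+1)_1$).

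Next I would compute $wt(u\cdot S(\delta+1)_1)$ for a fixed nonzero $u$. The columns of $S(\delta+1)$ are all nonzero vectors of $\mathbb F_2^{\delta+1}$, and $S(\delta+1)_1$ keeps exactly those columns $x$ with first coordinate $x_1=1$; there are $2^{\delta}$ such columns, one for each choice of the remaining $\delta$ coordinates. The entry of $u\cdot S(\delta+1)_1$ in the column indexed by $x$ is $\langle u,x\rangle = u_1 + \sum_{\ell=2}^{\delta+1} u_\ell x_\ell$ over $\mathbb F_2$. So the weight is the number of $x\in\{1\}\times\mathbb F_2^{\delta}$ with $u_1 + \sum_{\ell\geq 2} u_\ell x_\ell = 1$. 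Write $\bar u=(u_2,\dots,u_{\delta+1})$ and $\bar x=(x_2,\dots,x_{\delta+1})$. If $\bar u\neq 0$, then the affine-linear function $\bar x\mapsto u_1+\langle\bar u,\bar x\rangle$ on $\mathbb F_2^{\delta}$ takes each value $0$ and $1$ exactly $2^{\delta-1}$ times (the preimage of $1$ is an affine hyperplane), so the weight is exactly $2^{\delta-1}$. If $\bar u=0$, then $u$ is nonzero forces $u_1=1$, and the function is constantly $1$, giving weight $2^{\delta}$, i.e. the all-ones codeword — and this is the unique exceptional case. Combining with the $m$-fold factor, every non-all-ones codeword has weight $m\cdot 2^{\delta-1}$, and since $m\cdot2^{\delta-1}\leq m\cdot2^{\delta}$, the minimum distance equals $m\cdot 2^{\delta-1}$.

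The only mild subtlety — and the step I would be most careful about — is verifying that the all-ones vector $(1\cdots1)\in\mathbb F_2^{m\cdot2^{\delta}}$ really does arise as a codeword and from a \emph{unique} message $u$, so that excluding it is the right statement; this is immediate because $u\cdot S(\delta+1)_1=(1\cdots1)$ forces $\langle u,x\rangle=1$ for all columns $x$ with $x_1=1$, and taking $x=e_1$ and $x=e_1+e_\ell$ for $\ell\geq 2$ forces $u_1=1$ and $u_\ell=0$. A secondary point to state cleanly is that in the concatenated matrix the map $u\mapsto u\cdot S(\delta+1)_1^m$ is well-defined on all of $\mathbb F_2^{\delta+1}$ and its image is exactly the set of $m$-repetitions of partial-simplex codewords, so no new low-weight words are introduced. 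Everything else is the routine affine-hyperplane counting argument above.
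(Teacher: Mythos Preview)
Your proof is correct. Both you and the paper first reduce to $m=1$ by the obvious concatenation argument. From there, however, the arguments diverge: the paper writes the full simplex generator matrix as
\[
S(\delta+1)=\left(S(\delta+1)_1\ \ \begin{array}{c} 0_{1\times(2^{\delta}-1)}\\ S(\delta)\end{array}\right)
\]
and then subtracts weights, using that every nonzero codeword of $\mathcal{S}(\delta+1)$ has weight $2^{\delta}$ while every nonzero codeword of $\mathcal{S}(\delta)$ has weight $2^{\delta-1}$, so the contribution from the $S(\delta+1)_1$ block is $2^{\delta}-2^{\delta-1}=2^{\delta-1}$ for any message involving rows beyond the first. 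You instead compute the weight directly by interpreting the entries of $u\cdot S(\delta+1)_1$ as the values of the affine form $\bar x\mapsto u_1+\langle\bar u,\bar x\rangle$ on $\mathbb F_2^{\delta}$ and invoking the standard hyperplane count. The paper's route is shorter because it reuses the already-established constant-weight property of simplex codes; your route is more self-contained and makes the uniqueness of the all-ones exceptional codeword completely explicit, which the paper leaves implicit.
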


\begin{proof}
First observe that it is enough to show the statement for $m=1$. Take a generator matrix $S(\delta+1)$ of a simplex code such that the first $2^{\delta}$ columns have a $1$ in the first row, i.e. write %and add the all zero vector as one additional column. Denote the resulting matrix by $G$. The code with generator matrix $G$ has minimum distance $2^{\delta}$ and all nonzero codewords have weight $2^{\delta}$. If we take a non-zero codeword which is in the span of the last $\delta$ rows of $G$, then the vector in $\mathbb F_2^{2^{\delta}}$ formed by the first half of its entries equals the vector formed by the second half of its entries, i.e. the weight of the first half is $1/2$ of the full weight. If we take a non-zero codeword which is spanned by rows of $G$ including the first row, we obtain that the sum of the first half of this codeword and the second half of this codeword equals $(1\ \hspace{-1.5mm} \cdots\ \hspace{-1.5mm}1)\in\mathbb F_2^{2^{\delta}}$, i.e. the weight of the first half is equal to $2^{\delta}$ minus the weight of the second half. In summary, any non-zero codeword of the partial simplex code $S$ has either weight $2^{\delta-1}$ or is equal to $(1\ \hspace{-1.5mm} \cdots\ \hspace{-1.5mm}1)\in\mathbb F_2^{2^{\delta}}$. 
$S(\delta+1)=\left(S(\delta+1)_1\ \ \begin{array}{c}
     0_{1\times(2^{\delta}-1)}\\  S(\delta)
\end{array}\right)$. Since all codewords of $\mathcal{S}(\delta+1)$ have weight $2^{\delta}$ and all codewords in $\mathcal{S}(\delta)$ have weight $2^{\delta-1}$, all codewords of $\mathcal{S}(\delta+1)_1$ except the first row of $S(\delta+1)_1$ have weight $2^{\delta}-2^{\delta-1}=2^{\delta-1}$.
\end{proof}

Next, we construct binary convolutional codes with optimal column distances from $m$-fold partial simplex codes.

\begin{theorem}
%Let $S(\delta+1)^m_1\in\mathbb F_2^{(\delta+1)\times m\cdot 2^{\delta}}$ be a generator matrix of an $m$-fold partial simplex code. 
Let $n=m\cdot 2^{\delta}$ and $\mathcal{C}$ be the $(n,1,\delta)$ convolutional code with generator matrix $G(z)=\sum_{i=0}^{\delta}G_i\in\mathbb F_2[z]^{1\times m\cdot 2^{\delta}}$ where $\begin{pmatrix}
   G_0^\top \hspace{-1.5mm} & \hspace{-1.5mm} \cdots & \hspace{-1.5mm}  G_{\delta}^\top
\end{pmatrix}^\top=S(\delta+1)^m_1$. Then, $\mathcal{C}$ is non-catastrophic and
$$d_j^c(\mathcal{C})=\begin{cases}
n+j\frac{n}{2} & \text{for}\quad j\leq\delta\\
n+\delta\frac{n}{2}  & \text{for}\quad j\geq\delta
\end{cases}\quad\text{and}\ d_{free}(\mathcal{C})=n+\delta\frac{n}{2}.$$
\end{theorem}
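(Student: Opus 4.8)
The plan is to analyse the code coordinate by coordinate and to reduce everything to a small counting problem about polynomial coefficients over $\mathbb F_2$. First I would make the structure of $S(\delta+1)_1^m$ explicit: its columns are the $m$ copies of all vectors $(1,a_1,\dots,a_\delta)^\top$ with $(a_1,\dots,a_\delta)\in\mathbb F_2^\delta$. Reading off the rows, $G_0=(1\ \cdots\ 1)$ has weight $n$ while each of $G_1,\dots,G_\delta$ has weight $m2^{\delta-1}=\tfrac n2$; moreover the column $(1,0,\dots,0)^\top$ forces $G(z)$ to have an entry equal to the constant polynomial $1$, so $G(z)$ has rank $1$ for every $z\in\overline{\mathbb F}_2$, i.e.\ $G(z)$ is left prime and $\mathcal C$ is non-catastrophic (in particular delay-free, so $d_j^c(\mathcal C)$ can be computed from the $u_0\neq0$ formula and $d_{free}(\mathcal C)=\lim_{j\to\infty}d_j^c(\mathcal C)$).

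The crucial step is the following reformulation. I index the coordinates of the code by $(a_1,\dots,a_\delta)\in\mathbb F_2^\delta$ and attach to such a coordinate the polynomial $a(z):=1+a_1z+\cdots+a_\delta z^\delta$, so that its constant term $a_0=1$ is exactly the $G_0$-entry in that coordinate. Then, for $u(z)$ with $u_0\neq0$, i.e.\ $u_0=1$ over $\mathbb F_2$, the codeword $v(z)=u(z)G(z)$ has in that coordinate, at time $t$, the value $\sum_{l=0}^t u_l a_{t-l}=[z^t]\big(u(z)a(z)\big)$, where $[z^t]p$ denotes the coefficient of $z^t$ in $p$. Hence for every $j$,
\[
wt(v_0,\dots,v_j)=m\sum_{t=0}^{j}\#\{(a_1,\dots,a_\delta)\in\mathbb F_2^\delta:\ [z^t]\,u(z)a(z)\neq0\}.
\]
I would then evaluate these cardinalities for $0\le t\le\delta$: for $t=0$ the coefficient is $u_0a_0=1$ for every $a$, contributing $2^\delta$; for $1\le t\le\delta$ the coefficient equals $a_t$ plus a term depending only on $a_1,\dots,a_{t-1}$ and $u_1,\dots,u_t$ (because $u_0=a_0=1$), hence it is a balanced affine function of the free bit $a_t$ and the set has size $2^{\delta-1}$. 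Summing over $t$ gives, for every admissible $u$ and every $j\le\delta$, $wt(v_0,\dots,v_j)=m(2^\delta+j2^{\delta-1})=n+j\tfrac n2$, and in particular $d_j^c(\mathcal C)=n+j\tfrac n2$ for $j\le\delta$.

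For $j\ge\delta$ I would combine two facts: $d_j^c(\mathcal C)\ge d_\delta^c(\mathcal C)=n+\delta\tfrac n2$ since column distances are non-decreasing in $j$, and the codeword $v(z)=G(z)$ (the choice $u(z)=1$) satisfies $v_0\neq0$ and $wt(v_0,\dots,v_j)=\sum_{i=0}^\delta wt(G_i)=n+\delta\tfrac n2$ for all $j\ge\delta$; together these give $d_j^c(\mathcal C)=n+\delta\tfrac n2$ for $j\ge\delta$, and then $d_{free}(\mathcal C)=\lim_{j\to\infty}d_j^c(\mathcal C)=n+\delta\tfrac n2$ using non-catastrophicity. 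The only genuinely nontrivial point is the reformulation in the second paragraph — recognising that the coordinates of $S(\delta+1)_1^m$ realise exactly all polynomials of degree $\le\delta$ with constant term $1$, which turns the Hamming-weight computation into the ``balanced bit'' count; after that the bookkeeping (including the monotonicity shortcut for $j>\delta$ and the small cases $\delta=1$ and $m=1$) is routine, and I would also check in passing that $G_\delta\neq0$ so that $\mathcal C$ genuinely has degree $\delta$.
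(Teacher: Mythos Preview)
Your proof is correct and takes a genuinely different route from the paper's. The paper argues structurally: it observes that for each $j\le\delta$ the matrix $(G_j^\top\ \cdots\ G_0^\top)^\top$ is itself the generator matrix of a $(2^{\delta-j}m)$-fold partial simplex code of dimension $j+1$, invokes the weight proposition for such codes to obtain $d_j^c-d_{j-1}^c\ge \tfrac n2$ via the lower bound of \eqref{main1}, and then matches this against the upper bound $\sum_i wt(G_i)$ from the same lemma. You bypass both that lemma and the weight proposition entirely: by identifying each coordinate with a polynomial $a(z)$ having constant term $1$ and writing the time-$t$ output as $[z^t]\big(u(z)a(z)\big)$, you reduce the Hamming-weight computation to a balanced-affine count in the free bit $a_t$. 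This is more elementary and self-contained, and it in fact establishes the slightly sharper statement that $wt(v_0,\dots,v_j)=n+j\tfrac n2$ for \emph{every} message with $u_0=1$, not only for the minimiser --- a fact the paper needs and states separately at the start of the proof of the subsequent optimality theorem. Conversely, the paper's approach keeps the nested partial-simplex structure explicit, which is what later drives the generalisation to dimension $k>1$.
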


\begin{proof}
First, $\mathcal{C}$ is non-catastrophic since one of the entries of $G(z)$ is equal to $1$, because the first standard basis vector $e_1$ corresponds to one column of $S(\delta+1)^m_1$.

Obviously, $d_0^c(\mathcal{C})=n$. To calculate the remaining column distances, we apply Lemma \ref{bounds} and distinguish the two cases $j\leq\delta$ and $j\geq \delta$.
For $j=1,\hdots,\delta$, one has
\begin{equation}\label{main}
\begin{aligned}
\small
    d_j^c(\mathcal{C})-d_{j-1}^c(\mathcal{C}) & \geq \min_{u_0\neq 0}  wt\left((u_0\ \hspace{-1.5mm} \cdots\ \hspace{-1.5mm} u_j)\begin{pmatrix}
    G_j\vspace{-1.5mm}\\ \vdots\\ \vspace{-1.5mm}G_0
\end{pmatrix}\right)\\
         & =m \cdot 2^{\delta-j}\cdot 2^{j-1}=m\cdot 2^{\delta-1}=\frac{n}{2} \nonumber
\end{aligned}
\end{equation} \normalsize since $\begin{pmatrix}
    G_j^\top & \hspace{-1.5mm} \cdots & \hspace{-1.5mm} G_0^\top
\end{pmatrix}^\top$ is a generator matrix of a $(2^{\delta-j}\cdot m)$ - fold partial simplex code of dimension $j+1$ (and the condition $u_0\neq 0$ ensures that we do not obtain the codeword $(1\ \cdots \ 1)$ inside this partial simplex code). This shows $d_j^c(\mathcal{C})\geq n+j\frac{n}{2}$ for $j\leq\delta$. Moreover, the upper bound of \eqref{main1} yields that $d_j^c(\mathcal{C})\leq  \sum_{i=0}^{\min(j,\delta)}wt(G_i)  =n+j\frac{n}{2}$ for $j\leq\delta$.\\
For $j\geq\delta$, we obtain $n+\delta\frac{n}{2}=d_{\delta}^c(\mathcal{C})\leq d_j^c(\mathcal{C})\leq \sum_{i=0}^{\delta}wt(G_i)=n+\delta\frac{n}{2}$. 
\end{proof}

\begin{comment}
    In general we can construct in this way binary convolutional codes with parameters $(m\cdot 2^{\delta},1,\delta)$ for $m,\delta\in\mathbb N$ 
and column distances
$$d_j^c=\begin{cases}
n+j\frac{n}{2} & \text{for}\quad j\leq\delta\\
n+n\frac{\delta}{2}  & \text{for}\quad j\geq\delta
\end{cases}$$
\end{comment}

\begin{theorem}\label{optimal}
Let $\mathcal{C}$ be a binary $(m\cdot 2^{\delta},1,\delta)$ convolutional code constructed as in the previous theorem. Then, $\mathcal{C}$ has optimal column distances in the 
sense of Definition \ref{defopt}.
%following sense: There exists no binary $(m\cdot 2^{\delta},1,\delta)$ convolutional code $\hat{\mathcal{C}}$ such that $d^c_j(\hat{\mathcal{C}})>d^c_j(\mathcal{C})$ for some $j\in\mathbb N_0$ and $d^c_i(\hat{\mathcal{C}})=d^c_i(\mathcal{C})$ for all $0\leq i<j$.
\end{theorem}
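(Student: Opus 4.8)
The plan is to show that the column distances achieved by the construction simultaneously meet natural upper bounds that apply to \emph{every} binary $(m\cdot2^\delta,1,\delta)$ convolutional code, so that no competing code $\hat{\mathcal C}$ can strictly beat $\mathcal C$ at any index $j$ while tying it at all smaller indices. First I would record the relevant upper bounds from Section III: by Lemma \ref{bounds}, for any binary $(n,1,\delta)$ code with $n=m\cdot2^\delta$ we have $d_j^c(\hat{\mathcal C})\le \min_{r}\sum_{i=0}^{\min(j,\delta)} wt_r(G_i) \le n(\min(j,\delta)+1)$ from \eqref{main2}; in particular $d_0^c(\hat{\mathcal C})\le n$ and $d_\delta^c(\hat{\mathcal C})\le n(\delta+1)$. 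Note that these crude bounds alone do not match $n+j\tfrac n2$, so the heart of the argument is a sharper, construction-specific comparison rather than a direct appeal to a known bound.

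The key step is an inductive argument on $j$. Suppose $\hat{\mathcal C}$ is a binary $(n,1,\delta)$ code with $d_i^c(\hat{\mathcal C})=d_i^c(\mathcal C)$ for all $i<j$; I must show $d_j^c(\hat{\mathcal C})\le d_j^c(\mathcal C)$. For $j=0$ this is just $d_0^c(\hat{\mathcal C})\le n=d_0^c(\mathcal C)$. For the inductive step with $1\le j\le\delta$, I would use the lower bound in \eqref{main1} together with the hypothesis that all earlier column distances agree. The idea is that $d_j^c(\hat{\mathcal C})-d_{j-1}^c(\hat{\mathcal C})$ is at most $\min_r wt_r(\hat G_j)+\text{(contributions already counted)}$ — more precisely, I would argue that the increment $d_j^c(\hat{\mathcal C})-d_{j-1}^c(\hat{\mathcal C})$ cannot exceed $\tfrac n2=m\cdot2^{\delta-1}$. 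To see this, observe that for a rate $1/n$ code the relevant block code $\begin{pmatrix}\hat G_j^\top&\cdots&\hat G_0^\top\end{pmatrix}^\top$ has dimension $j+1$ over $\mathbb F_2$ and length $n$; since $j+1\le\delta+1$, the total weight of all its $2^{j+1}-1$ nonzero codewords is at most $(2^{j+1}-1)\cdot\tfrac{n\cdot2^j}{2^{j+1}-1}$-type bound — but the cleaner route is: the number of coordinate positions is $n=m\cdot2^\delta$, and among the $2^{j}$ codewords with $u_0\ne0$ (where $u_0$ is the coordinate in direction $e_1$ determining whether row $0$ is active), their average weight over the positions is controlled so that the minimum among them is at most the average, which for the partial-simplex structure equals $\tfrac n2$. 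I would make this rigorous by a counting/averaging argument showing $\min_{u_0\ne0} wt\big((u_0\cdots u_j)(\hat G_j^\top\cdots\hat G_0^\top)^\top\big)\le m\cdot2^{\delta-1}$, hence by \eqref{main1} (or directly from the telescoping definition of $d_j^c$) $d_j^c(\hat{\mathcal C})\le d_{j-1}^c(\hat{\mathcal C})+\tfrac n2 = d_{j-1}^c(\mathcal C)+\tfrac n2 = d_j^c(\mathcal C)$. For $j\ge\delta$, I would note $d_j^c(\hat{\mathcal C})\le d_{free}(\hat{\mathcal C})$ and invoke Theorem \ref{th2}(i) or the same averaging over the first row of the generator matrix: $d_j^c(\hat{\mathcal C})\le wt(\hat G_0)+\cdots+wt(\hat G_\delta)\le$ the total, combined with the fact that once the earlier column distances are fixed at $n+i\tfrac n2$ the budget is exhausted, giving $d_j^c(\hat{\mathcal C})\le n+\delta\tfrac n2=d_j^c(\mathcal C)$.

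Assembling these pieces: whenever $\hat{\mathcal C}$ ties $\mathcal C$ on $d_0^c,\dots,d_{j-1}^c$, it cannot exceed $\mathcal C$ on $d_j^c$, which is exactly the statement of Definition \ref{defopt}. I would present this as a clean induction, stating the per-step increment bound as the crucial lemma and deriving optimality as an immediate corollary.

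The main obstacle I anticipate is justifying the sharp increment bound $d_j^c(\hat{\mathcal C})-d_{j-1}^c(\hat{\mathcal C})\le \tfrac n2$ for an \emph{arbitrary} competing code, not just the simplex-based one. For the construction itself this equality came from the explicit partial-simplex structure (each $(2^{\delta-j}m)$-fold partial simplex code has all relevant codewords of weight exactly $m\cdot2^{\delta-1}$); for a general $\hat G$ I only get to average. The delicate point is that the averaging must be done over precisely the set of codewords with $u_0\ne0$ (those are the ones constrained by the column-distance definition), and I need the average weight of that coset-like set, in a length-$n$ block code of combinatorial dimension $j+1$, to be at most $n/2$ — this should follow because complementation-type symmetry or a direct double-count of $1$'s forces at least half the "mass" to be distributed so that some codeword in the active set has weight $\le n/2$, but pinning down the exact combinatorial statement (and handling the edge cases where $\hat G_i=0$ for some $i<j$, which only helps) is where the real work lies.
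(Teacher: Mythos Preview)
Your inductive scaffold is right, but two steps do not go through as written. First, you appeal to the \emph{lower} bound in \eqref{main1} to derive an \emph{upper} bound on $d_j^c(\hat{\mathcal C})$; that inequality points the wrong way and cannot give the increment estimate. The clean argument is: once $d_0^c(\hat{\mathcal C})=n$ forces $\hat G_0=(1\cdots1)$, take any message $u'=(u_0',\dots,u_{j-1}')$ realising $d_{j-1}^c(\hat{\mathcal C})$ and extend it by $u_j\in\{0,1\}$; the two possible last blocks differ by $\hat G_0=(1\cdots1)$ and are therefore complements, so one has weight $\le n/2$, whence $d_j^c(\hat{\mathcal C})\le d_{j-1}^c(\hat{\mathcal C})+n/2$. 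Your averaging over all $2^j$ words with $u_0\ne0$ bounds $\min_{u_0\ne0}wt(\text{last block})$, but that quantity does not control the increment; the extension-of-a-minimiser argument does. This is essentially the paper's mechanism: it uses the specific message $(1,0,\dots,0,1)$ after first arguing that $\hat G_0,\dots,\hat G_j$ may be taken (up to column permutation) equal to the construction's coefficients.

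Second, your handling of $j>\delta$ does not close. The increment bound only yields $d_j^c(\hat{\mathcal C})\le n+j\tfrac n2$, which exceeds the target $n+\delta\tfrac n2$; and the upper bound of \eqref{main1} gives $d_j^c(\hat{\mathcal C})\le\sum_{i=0}^{\delta}wt(\hat G_i)$, which you have not shown equals $n+\delta\tfrac n2$. What is actually required---and what the paper's inductive step tacitly establishes---is that the equalities $d_i^c(\hat{\mathcal C})=n+i\tfrac n2$ for $i\le\delta$ force $wt(\hat G_i)=\tfrac n2$ for every $1\le i\le\delta$ (indeed they pin down $\hat G_0,\dots,\hat G_\delta$ to the partial-simplex matrices up to column permutation). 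With that in hand, $\sum_i wt(\hat G_i)=n+\delta\tfrac n2$ and the upper bound of \eqref{main1} finishes the case $j>\delta$.
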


\begin{proof}
First observe that in our construction for any $j\in\{0,\hdots,\delta\}$, we have that $wt\left((u_0 \hspace{-1.5mm} \ \cdots\ \hspace{-1.5mm} u_j)\begin{pmatrix}
    G_j\vspace{-0.5mm}\\ \vdots\\ \vspace{-0.5mm}G_0
\end{pmatrix}\right)$ has the same value for all $(u_0\hspace{-1.5mm} \ \cdots\ \hspace{-1.5mm}u_j)$ with $u_0\neq 0$. Therefore, the lower bound in \eqref{main1} is sharp. Moreover, we saw in the previous proof that also the upper bound of \eqref{main1} is sharp. Hence, to achieve better column distances than with our construction, one would need to increase the weight of at least one $G_i$.\\ We use this to show via induction with respect to $j$ that our construction for the $G_j$ leads to optimal column distances. Obviously, the choice $G_0=(1\ \cdots\ 1)$ leads to optimal $d_0^c$ and then any $G_1$ with weight $\frac{n}{2}$ leads to optimal $d_1^c$ (column permutations do not change distances). Assume that for any $j\in\mathbb N$, our construction leads to optimal $d_j^c$ and we want to show that it also leads to optimal $d_{j+1}^c$. Therefore, we suppose that $G_0,\hdots, G_j$ are given as in the previous theorem and we need to find $G_{j+1}$ such that $d_{j+1}^c$ is optimal. For this we can assume $j<\delta$ as $G_i=0$ for $i>\delta$. As observed before, one can only get larger $d_{j+1}^c$ than in our construction if the weight of $G_{j+1}$ is larger than in our construction, i.e. larger than $\frac{n}{2}$. But then the weight of the sum of the first and the last row of $G_{j+1}^c$ is equal to $n+j\frac{n}{2}+x$ where $x$ is the weight of the sum of $G_{j+1}$ and $(1\ \cdots\ 1)$, which is smaller than $\frac{n}{2}$, i.e. we obtain a smaller $d^c_{j+1}$ than with our construction.
\begin{comment}
\textcolor{red}{I still need to write down what happens if $j>\delta$}

From the observation at the beginning of this proof we know that this can be reduced to finding $G_{i+1}$ such that $\min _{u_0\neq 0} wt\left((u_0\ \cdots\ u_{i+1})\begin{pmatrix}
    G_{i+1}\\ \vdots\\ G_0
\end{pmatrix}\right)$ is maximized. Moreover, we know that

\begin{equation}\label{main}
\begin{aligned}
    \min _{u_0\neq 0} wt\left((u_0\ \cdots\ u_{i+1})\begin{pmatrix}
    G_{i+1}\\ \vdots\\ G_0
\end{pmatrix}\right) & \leq \min _{u_0\neq 0} wt\left((u_0\ \cdots\ u_{i})\begin{pmatrix}
    G_{i}\\ \vdots\\ G_0
\end{pmatrix}\right)\\
         & = \frac{n}{2}\nonumber
\end{aligned}
\end{equation}

As the choice for $G_{i+1}$ for our construction leads to $$\min_{u_0\neq 0} wt\left((u_0\ \cdots\ u_{i+1})\begin{pmatrix}
    G_{i+1}\\ \vdots\\ G_0
\end{pmatrix}\right)=\frac{n}{2},$$ it is optimal.
   
\end{comment}
\begin{comment}
 As the smaller upper bound of Lemma \ref{bounds} is reached in the proposed convolutional code construction, one would need to increase the weight of $G_1,\hdots,G_{\delta}$ to increase $d^c_1,\hdots$. But increasing the weight of $G_1$ decreases $d^c_1$, i.e. $d^c_1$ (as well as $d^c_0$) are already optimal and we have to keep the weight $G_0$ and $G_1$ as they are (column permutations do not change the distances)
TO BE CONTINUED
\end{comment}
\end{proof}
In the following, we will extend this construction idea to $(n,1,\delta)$ convolutional codes where $n$ is not of the form $m\cdot 2^{\delta}$ for some $m\in\mathbb N$. 
For this, we use that if we keep the length $n$ and increase the degree from $\delta$ to $\delta+1$, the coefficient matrices of the generator matrix of the optimal code of degree $\delta+1$ have to coincide until $G_{\delta}$ with some optimal code of degree $\delta$. Similarly, if we keep the degree $\delta$ and increase the length from $n$ to $n+1$, the generator matrix for an optimal code of length $n+1$ has to coincide in its first $n$ entries with an optimal code of length $n$. 
Hence, we can use $S(\delta + 1)_{1}^{m}$ with $m = \lfloor\frac{n}{2^{\delta}}\rfloor$ %an $\lfloor\frac{n}{2^{\delta}}\rfloor$-fold partial simplex code of dimension $\delta+1$ 
for the construction and add $n-\lfloor\frac{n}{2^{\delta}}\rfloor\cdot 2^{\delta}$ further columns of $S(\delta+1)_1$. When deciding which additional columns of $S(\delta+1)_1$ to add, we just need to make sure to maximize the part of the weight produced by the additional columns, since we saw before that $wt\left((u_0\ \cdots\ u_j)\begin{pmatrix}
    G_j\vspace{-1.5mm}\\ \vdots\\ \vspace{-1.5mm}G_0
\end{pmatrix}\right)=\frac{n}{2}$ for all $(u_0\ \cdots\ u_j)$ with $u_0\neq 0$ if $\begin{pmatrix}
    G_j^\top & \hspace{-1.5mm} \cdots & \hspace{-1.5mm}G_0^\top
\end{pmatrix}^\top$ consists of rows of a generator matrix of a partial simplex code.
We will explain how to do this for small values of $\delta$ to illustrate the procedure.\\
For $\delta=k=1$, we know from the preceding results that in case $n$ is even, we obtain optimal column distances from $S(2)_{1}^{\frac{n}{2}}$. %$\frac{n}{2}$-fold partial simplex codes of dimension $2$. 
If $n$ is odd, to construct $\begin{pmatrix}
    G_0^\top &\hspace{-1.5mm} \cdots & \hspace{-1.5mm}G_{\delta}^\top
\end{pmatrix}^\top$ we can use $S(2)_{1}^{\lfloor\frac{n}{2}\rfloor}$ %$\lfloor\frac{n}{2}\rfloor$-fold partial simplex code of dimension $2$ 
and add another column %from the matrix
$S(2)_1=\begin{pmatrix}
    1 \hspace{-1.5mm} & \hspace{-1.5mm}1 \\ 1\hspace{-1.5mm}  & \hspace{-1.5mm}0 
\end{pmatrix}$. It is easy to see that in this case for the column distances it does not matter which of the two columns of $S(2)_1$ we choose and we obtain in any case that $d_0^c(\mathcal{C})=n$ and $d_{free}(\mathcal{C})=d_j^c(\mathcal{C})=n+\lfloor\frac{n}{2}\rfloor$ for $j\in\mathbb N$, which is optimal.\\
For $\delta=2$, we can use $S(3)_{1}^{\frac{n}{4}}$ %$\frac{n}{4}$-fold partial simplex codes of dimension $3$ 
in case $n\equiv 0 \mod 4$. If $n\not\equiv 0\mod 4$, to obtain $(n,1,\delta)$ convolutional codes $\mathcal{C}$ with optimal distances, we just need to find optimal $(s,1,\delta)$ convolutional codes $\mathcal{C}_{mod\ 4}$ with $s\in\{1,2,3\}$ such that $n\equiv s \mod 4$ to use it to extend $S(3)_{1}^{\lfloor\frac{n}{4}\rfloor}$.\\
%an $\lfloor\frac{n}{4}\rfloor$-fold partial simplex code of dimension 3.\\
For $s=1$, i.e. $n-1\equiv 0 \mod 4$, no matter which column of $S(3)_1=\begin{pmatrix}
    1 \hspace{-1.5mm} & \hspace{-1.5mm}1\hspace{-1.5mm} & \hspace{-1.5mm}1\hspace{-1.5mm} & \hspace{-1.5mm}1\\ 1\hspace{-1.5mm} & \hspace{-1.5mm}0\hspace{-1.5mm} & \hspace{-1.5mm}1\hspace{-1.5mm} & \hspace{-1.5mm}0\\ 1 \hspace{-1.5mm}& \hspace{-1.5mm}1\hspace{-1.5mm} & \hspace{-1.5mm}0\hspace{-1.5mm} & \hspace{-1.5mm}0
\end{pmatrix}$ we choose to construct $\begin{pmatrix}
    G_0^\top &\hspace{-1.5mm}  G_1^\top & \hspace{-1.5mm} G_2^\top
\end{pmatrix}^\top$, we obtain that $d_j^c(\mathcal{C}_{mod\ 4})=1$ for all $j\in\mathbb N_0$, i.e. $d_j^c(\mathcal{C})=n+\left(\frac{n-1}{2}\right)j$ for $j\leq\delta=2$ and $d_{free}(\mathcal{C})=d_j^c(\mathcal{C})=n+\left(\frac{n-1}{2}\right)\delta=2n-1$ for $j\geq\delta=2$.\\
For $s=2$, we know from the case $\delta=1$, which gives us $\begin{pmatrix}
    G_0^\top &\hspace{-1.5mm}  G_1^\top 
\end{pmatrix}^\top$, that to have optimal $d_0^c$ and $d_1^c$, we need to choose two columns of $S(3)_1$ of the form $\begin{pmatrix} 1 \hspace{-1.5mm}& \hspace{-1.5mm}1\\ 1\hspace{-1.5mm} & \hspace{-1.5mm}0\\ x\hspace{-1.5mm} & \hspace{-1.5mm}y\end{pmatrix}$ with $x,y\in\mathbb F_2$. Doing the calculations, one obtains (denoting $d_i^c=d_i^c(\mathcal{C}_{mod\ 4})$) in any case $d_0^c=2,\  d_1^c=3,\ d_2^c=3$. For $(x,y)\in\{(0,0), (0,1)\}$, $d_i^c=3$ for $i\geq 3$, for $(x,y)=(1,0)$, $ d_i^c=4$ for $i\geq 3$, for $(x,y)=(1,1)$, $d_3^c=d_4^c=4$ and $d_i^c=5$ for $i\geq 5$.
This means $(x,y)=(1,1)$ yields the unique optimal choice leading to $d_0^c(\mathcal{C})=n$, $d_1^c(\mathcal{C})=n+\frac{n}{2}$, $d_2^c(\mathcal{C})=2n-1$, $d_3^c(\mathcal{C})=d_4^c(\mathcal{C})=2n$, $d_{free}(\mathcal{C})=d_i^c(\mathcal{C})=2n+1$ for $i\geq 5$.\\
For $s=3$, using the previous results, we have the two options $\begin{pmatrix} 1\hspace{-1.5mm} & \hspace{-1.5mm}1\hspace{-1.5mm} & \hspace{-1.5mm}1\\ 1\hspace{-1.5mm} & \hspace{-1.5mm}0\hspace{-1.5mm} & \hspace{-1.5mm}1\\ 1\hspace{-1.5mm} & \hspace{-1.5mm}1\hspace{-1.5mm} & \hspace{-1.5mm}0\end{pmatrix}$ and $\begin{pmatrix} 1\hspace{-1.5mm} & \hspace{-1.5mm}1\hspace{-1.5mm} & \hspace{-1.5mm}1\\ 1\hspace{-1.5mm} & \hspace{-1.5mm}0\hspace{-1.5mm} & \hspace{-1.5mm}0\\ 1\hspace{-1.5mm} & \hspace{-1.5mm}1\hspace{-1.5mm} & \hspace{-1.5mm}0\end{pmatrix}$ for choosing three columns of $S(3)_{1}$. For $i\leq 3$, both lead to the same column distances $d_i^c=i+3$. But, for $i\geq 4$, the first option has $d_i^c=7$, while the second option has $d_i^c=6$. So, the first choice is optimal, leading to $d_0^c(\mathcal{C})=n$, $d_1^c(\mathcal{C})=n+\frac{n-1}{2}$, $d_2^c(\mathcal{C})=2n-1$, $d_3^c(\mathcal{C})=2n$, $d_{free}(\mathcal{C})=d_i^c(\mathcal{C})=2n+1$ for $i\geq 4$.

Doing all the previous calculations, one also observes that until $j=\delta$, the lower bound of \eqref{main1} is sharp and all choices for the new columns lead to the same $j$-th column distances. However, for $j> \delta$, different choices for the additional columns can lead to different $j$-th column distances, in which case the lower bound of \eqref{main1} is not sharp anymore. Also note that this lower bound can only increase until $j=\delta$, since for larger $j$ the new summands in the bound are zero because of $G_j=0$ for $j>\delta$. This implies that for fully optimizing all column distances it is not enough to just work with the lower bound of \eqref{main1} but to calculate the exact column distances. But it also means that focusing especially on the first column distances, maximizing just the lower bound in \eqref{main1} leads to very good results.
%\textcolor{red}{add comparison with paper of Johanneson (I spend yesterday evening with studying the details of this paper to figure out how the results there are related to ours and know now what to write about this)}
Since the computational effort is increasing with $\delta$, in the following we present codes with $\delta=3$ and $\delta=4$, where we optimized the lower bound in \eqref{main1} with the help of computer search.

For $\delta=3$, we need to find optimal $(s,1,3)$ convolutional codes $\mathcal{C}_s$ for $s=1,\hdots,7$. Let $G(z)$ be a generator matrix of $\mathcal{C}_s$. From the case $\delta=2$ we deduce that $\begin{pmatrix}
    G_0^\top &\hspace{-1.5mm}  G_1^\top & \hspace{-1.5mm} G_2^\top
\end{pmatrix}^\top$ has to be equal to the first $s$ columns of the matrix
$S(3)_1^2
%= \begin{pmatrix}
%    1 & 1 & 1 & 1 & 1 & 1 & 1 & 1\\ 1 & 0 & 1 & 0 & 1 & 0 & 1 & 0\\ 1 & 1 & 0 & 0 &  1 & 1 & 0 & 0
%\end{pmatrix}
.$
Denote by $wt^{s}$ the minimal weight of the code generated by the first $s$ columns of $\begin{pmatrix}
   S(3)_1^2\\  \tilde{G}_3 \end{pmatrix}$. Then, to maximize the lower bound of \eqref{main1}, we need to find $\tilde{G}_3$ such that $wt^s$ is maximized. Note that the additional condition $u_0\neq 0$ in \eqref{main1} does not matter because $u_0$ will be multiplied with $G_3$ and the already given first $3$ rows have been chosen in an optimal way.
We obtained with the help of the computer that in order to optimize $wt^s$, $\tilde{G}_3$ has to be equal to 
%the first $s$ entries of 
one of the following vectors:\\ 
$\left(\begin{array}{rrrrrrrr}
\hspace{-2mm} 0 \hspace{-2mm} & \hspace{-2mm} 0 \hspace{-2mm}& \hspace{-2mm} 0 \hspace{-2mm}& \hspace{-2mm} 1 \hspace{-2mm} &  \hspace{-2mm} 1 \hspace{-2mm}& \hspace{-2mm} 1 \hspace{-2mm}& \hspace{-2mm} 1 \hspace{-2mm}& \hspace{-2mm} 0 \hspace{-2mm}\end{array} \hspace{-2mm}\right)$, $\left(\begin{array}{rrrrrrrr}
\hspace{-2mm} 0 \hspace{-2mm} & \hspace{-2mm} 0 \hspace{-2mm}& \hspace{-2mm} 1 \hspace{-2mm}& \hspace{-2mm} 0 \hspace{-2mm} &  \hspace{-2mm} 1 \hspace{-2mm}& \hspace{-2mm} 1 \hspace{-2mm}& \hspace{-2mm} 0 \hspace{-2mm}& \hspace{-2mm} 1 \hspace{-2mm}\end{array}\hspace{-2mm}\right)$, $\left(\begin{array}{rrrrrrrr}
\hspace{-2mm} 0 \hspace{-2mm} & \hspace{-2mm} 1 \hspace{-2mm}& \hspace{-2mm} 0 \hspace{-2mm}& \hspace{-2mm} 0 \hspace{-2mm} &  \hspace{-2mm} 1 \hspace{-2mm}& \hspace{-2mm} 0 \hspace{-2mm}& \hspace{-2mm} 1 \hspace{-2mm}& \hspace{-2mm} 1 \hspace{-2mm}\end{array}\hspace{-2mm}\right)$,$\left(\begin{array}{rrrrrrrr}
\hspace{-2mm} 0 \hspace{-2mm} & \hspace{-2mm} 1 \hspace{-2mm}& \hspace{-2mm} 1 \hspace{-2mm}& \hspace{-2mm} 1 \hspace{-2mm} &  \hspace{-2mm} 1 \hspace{-2mm}& \hspace{-2mm} 0 \hspace{-2mm}& \hspace{-2mm} 0 \hspace{-2mm}& \hspace{-2mm} 0 \hspace{-2mm}\end{array}\hspace{-2mm}\right)$, 

$\left(\begin{array}{rrrrrrrr}
\hspace{-2mm} 1 \hspace{-2mm} & \hspace{-2mm} 0 \hspace{-2mm}& \hspace{-2mm} 0 \hspace{-2mm}& \hspace{-2mm} 0 \hspace{-2mm} &  \hspace{-2mm} 0 \hspace{-2mm}& \hspace{-2mm} 1 \hspace{-2mm}& \hspace{-2mm} 1 \hspace{-2mm}& \hspace{-2mm} 1 \hspace{-2mm}\end{array}\hspace{-2mm}\right)$, $\left(\begin{array}{rrrrrrrr}
\hspace{-2mm} 1 \hspace{-2mm} & \hspace{-2mm} 0 \hspace{-2mm}& \hspace{-2mm} 1 \hspace{-2mm}& \hspace{-2mm} 1 \hspace{-2mm} &  \hspace{-2mm} 0 \hspace{-2mm}& \hspace{-2mm} 1 \hspace{-2mm}& \hspace{-2mm} 0 \hspace{-2mm}& \hspace{-2mm} 0 \hspace{-2mm}\end{array}\hspace{-2mm}\right)$, $\left(\begin{array}{rrrrrrrr}
\hspace{-2mm} 1 \hspace{-2mm} & \hspace{-2mm} 1 \hspace{-2mm}& \hspace{-2mm} 0 \hspace{-2mm}& \hspace{-2mm} 1 \hspace{-2mm} &  \hspace{-2mm} 0 \hspace{-2mm}& \hspace{-2mm} 0 \hspace{-2mm}& \hspace{-2mm} 1 \hspace{-2mm}& \hspace{-2mm} 0 \hspace{-2mm}\end{array}\hspace{-2mm}\right)$, $\left(\begin{array}{rrrrrrrr}
\hspace{-2mm} 1 \hspace{-2mm} & \hspace{-2mm} 1 \hspace{-2mm}& \hspace{-2mm} 1 \hspace{-2mm}& \hspace{-2mm} 0 \hspace{-2mm} &  \hspace{-2mm} 0 \hspace{-2mm}& \hspace{-2mm} 0 \hspace{-2mm}& \hspace{-2mm} 0 \hspace{-2mm}& \hspace{-2mm} 1 \hspace{-2mm}\end{array}\hspace{-2mm}\right)$.

Since $S(3)_1^2$ is the generator matrix of a $2$-fold partial simplex code, it is clear that all $\tilde{G}_3$ have to be of the form\\ $(a\ \ b\ \ c\ \ d\ \ a+1\ \ b+1\ \ c+1\ \ d+1)$ with $a,b,c,d\in\mathbb F_2$ and that all of them have to have weight $4$. From the result of the computer search, we see in addition that $\tilde{G}_3$ is an optimal choice if and only if $a+b+c+d=1$. %\textcolor{green}{if and only if?}\textcolor{red}{yes because from the 8 possibilities I realized that it is all possibilities where the first half, i.e. (a,b,c,d) has odd weight}
For all these $8$ optimal $\tilde{G}_3$, we obtain the following values for $wt^s$: 

\begin{center}\begin{tabular}{||l|c|r|r|r|r|r|r||}
    s & 1 & 2 & 3 & 4 & 5 & 6 & 7 \\
    $wt^{s}$ & 0 & 0 & 0 & 1 & 1 & 2 & 3 \\ 
\end{tabular}\end{center}
and we have that $d^c_3(\mathcal{C}_s)\geq d^c_2(\mathcal{C}_s)+wt_s$.

For $\delta=4$, we obtained with the help of the computer that for maximizing the lower bound of \eqref{main1} for $j=4$, it does not matter which of the eight options for $\tilde{G}_3$ to take. This means we can take any of these $\tilde{G}_3$ to form the matrix
$S(4)_1=\begin{pmatrix}
    S(3)_1^2 & S(3)_1^2\\
      \tilde{G}_3 & \tilde{G}_3
\end{pmatrix}$.
Denote now by $wt^t$ the minimal weight of the code generated by the first $t\in\{1,\hdots,15\}$ columns of $\begin{pmatrix}S(4)_1\\ \tilde{G}_4\end{pmatrix}$.
\begin{comment}
    let us start by assuming that $G_{3}=\left(\begin{array}{rrrrrrrr}
\hspace{-2mm} 1 \hspace{-2mm} & \hspace{-2mm} 1 \hspace{-2mm}& \hspace{-2mm} 1 \hspace{-2mm}& \hspace{-2mm} 0 \hspace{-2mm} &  \hspace{-2mm} 0 \hspace{-2mm}& \hspace{-2mm} 0 \hspace{-2mm}& \hspace{-2mm} 0 \hspace{-2mm}& \hspace{-2mm} 1 \hspace{-2mm}\end{array}\hspace{-2mm}\right)$.
\end{comment}
With the computer, we found that for each optimal choice $\tilde{G}_3=(\tilde{G}_3^1\ \ \tilde{G}_3^2)$ with $\tilde{G}_3^1, \tilde{G}_3^2\in\mathbb F_2^4$, there are the same eight optimal choices for $\tilde{G}_4$, namely 
%If we write each of the eight optimal possibilities for $\tilde{G}_3$ in the form $\tilde{G}_3=\tilde{G}_3^1\ \ \tilde{G}_3^2$ with $\tilde{G}_3^1, \tilde{G}_3^2\in\mathbb F_2^4$
%These eight options for $\tilde{G}_4$ are 
exactly all vectors of the form $(\tilde{G}_3^1\ \  \tilde{G}_3^1\ \ \tilde{G}_3^2\ \ \tilde{G}_3^2)$.
\begin{comment}
    is $\left(\begin{array}{rrrrrrrrrrrrrrrr}
\hspace{-2mm} 0 \hspace{-2mm} & \hspace{-2mm} 0 \hspace{-2mm}& \hspace{-2mm} 0 \hspace{-2mm}& \hspace{-2mm} 0 \hspace{-2mm} &  \hspace{-2mm} 1 \hspace{-2mm}& \hspace{-2mm} 0 \hspace{-2mm}& \hspace{-2mm} 0 \hspace{-2mm}& \hspace{-2mm} 0 \hspace{-2mm}& \hspace{-2mm} 1 \hspace{-2mm}& \hspace{-2mm} 1 \hspace{-2mm}& \hspace{-2mm} 1 \hspace{-2mm}& \hspace{-2mm} 1 \hspace{-2mm} &  \hspace{-2mm} 0 \hspace{-2mm}& \hspace{-2mm} 1 \hspace{-2mm}& \hspace{-2mm} 1 \hspace{-2mm}& \hspace{-2mm} 1 \hspace{-2mm} \end{array}\hspace{-2mm}\right)$, \textcolor{green}{I confirmed once more and this is the result given (Note that up to the eighth entry has a weight of 1.is this a problem?)}
\end{comment}
In this way, we obtain 64 optimal codes
leading to the following optimal values for $wt^t$: 
\begin{center}
\begin{tabular}{||l|c|r|r|r|r|r|r|r|r|r|r|r|r|r|r||}
    \hspace{-2.5mm} t \hspace{-2.5mm} & \hspace{-2.5mm} 1 \hspace{-2.5mm} & \hspace{-2.5mm} 2 \hspace{-2.5mm} & \hspace{-2.5mm} 3 \hspace{-2.5mm} & \hspace{-2.5mm} 4 \hspace{-2.5mm} & \hspace{-2.5mm} 5 \hspace{-2.5mm} & \hspace{-2.5mm} 6 \hspace{-2.5mm} & \hspace{-2.5mm} 7 \hspace{-2.5mm} & \hspace{-2.5mm} 8 \hspace{-2.5mm} & \hspace{-2.5mm} 9 \hspace{-2.5mm} & \hspace{-2.5mm} 10 \hspace{-2.5mm} & \hspace{-2.5mm} 11 \hspace{-2.5mm} & \hspace{-2.5mm} 12 \hspace{-2.5mm} & \hspace{-2.5mm} 13 \hspace{-2.5mm} & \hspace{-2.5mm} 14 \hspace{-2.5mm} & \hspace{-2.5mm} 15 \hspace{-2.5mm}
    %& \hspace{-2.5mm} 16
    \\
    \hspace{-2.5mm} $wt^t$ \hspace{-2.5mm} & \hspace{-2.5mm} 0 \hspace{-2.5mm} & \hspace{-2.5mm} 0 \hspace{-2.5mm} & \hspace{-2.5mm} 0 \hspace{-2.5mm} & \hspace{-2.5mm} 0 \hspace{-2.5mm} & \hspace{-2.5mm} 1 \hspace{-2.5mm} & \hspace{-2.5mm} 1 \hspace{-2.5mm} & \hspace{-2.5mm} 1 \hspace{-2.5mm} & \hspace{-2.5mm} 2 \hspace{-2.5mm} & \hspace{-2.5mm} 2 \hspace{-2.5mm} & \hspace{-2.5mm} 3 \hspace{-2.5mm} & \hspace{-2.5mm} 4 \hspace{-2.5mm} & \hspace{-2.5mm} 4 \hspace{-2.5mm} & \hspace{-2.5mm} 5 \hspace{-2.5mm} & \hspace{-2.5mm} 6 \hspace{-2.5mm} & \hspace{-2.5mm} 7 \hspace{-2.5mm} 
    %& \hspace{-2.5mm} 8 
    \\
\end{tabular}
\end{center}

If one has code parameters where $2^{\delta}\nmid n$, but does not want to do all the calculations described above to optimize the column distances, one can do the following.

\begin{theorem}
Set $m=\lfloor\frac{n}{2^{\delta}}\rfloor$, $n_1:=m\cdot 2^{\delta}$ and write $n-n_1=2^{a_1-1}+...+2^{a_b-1}$ with $b,a_i\in\mathbb N$ for $i=1,\hdots,b$ and $\delta \geq a_1>...>a_b$. Set $\begin{pmatrix}
  G_0^\top \hspace{-1.5mm} & \hspace{-1.5mm} \cdots & \hspace{-1.5mm}  G_{\delta}^\top
\end{pmatrix}^\top=[S(\delta+1)^m_1\ S]$
where $S$ consists of $n-n_1$ columns of $S(\delta+1)_1$ and has the form
$S=D_0=\begin{pmatrix}
    S(a_1)_1 & D_1\\
    \ast & \ast
\end{pmatrix}$, $D_1=\begin{pmatrix}
    S(a_2)_1 & D_2\\
    \ast & \ast
\end{pmatrix}$, ..., $D_i=\begin{pmatrix}
    S(a_{i+1})_1 & D_{i+1}\\
    \ast & \ast
\end{pmatrix}$, ..., $D_{b-1}=S(a_b)_1$.
Then, the $(n,1,\delta)$ binary convolutional code $\mathcal{C}$ with generator matrix $G(z)$ has column distances which are near optimal in the following sense:
%$\begin{pmatrix}
 %  \tilde{G}_0^\top \hspace{-1.5mm} & \hspace{-1.5mm} \cdots & \hspace{-1.5mm}  \tilde{G}_{\delta}^\top
%\end{pmatrix}^\top=S(\delta+1)^m_1$ where $\tilde{G}_i$ denotes the matrix consisting of the first $n_1$ columns of $G_i$. Write $n-n_1=2^{a_1-1}+...+2^{a_b-1}$ with $b,a_i\in\mathbb N$ for $i=1,\hdots,b$ and $\delta>a_1>...>a_b$. Then, 
For $j\leq a_b-1$, $d_j^c(\mathcal{C})=n+j\frac{n}{2}$, i.e. optimal, and for $a_{x+1}<j+1\leq a_x$ with $x\in\{1,\hdots,b-1\}$,
 \begin{align*}
     d_j^c(\mathcal{C})\geq \frac{n_1}{2}+2^{a_1-2}+\cdots+2^{a_x-2}+d^c_{j-1}(\mathcal{C})
 \end{align*}
 \end{theorem}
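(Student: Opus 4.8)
The plan is to reuse the two devices from the proofs of the previous theorems. The first is the blockwise description of the column distance: since $G_0=(1\ \cdots\ 1)$ (every column of $S(\delta+1)^m_1$ and of $S$ has first entry $1$), $\mathcal{C}$ is delay-free (indeed non-catastrophic, since $S(\delta+1)^m_1$ contains the column $e_1$, so $G(z)$ has an entry equal to $1$), and therefore
\[
d_j^c(\mathcal{C})=\min_{u_0\neq 0}\sum_{i=0}^{j}wt\Big((u_0\ \cdots\ u_i)(G_i^\top\ \cdots\ G_0^\top)^\top\Big).
\]
Truncating a window that realizes $d_j^c(\mathcal{C})$ to its first $j$ coordinate blocks gives, exactly as in the proof of Theorem~\ref{optimal}, the recursion
\[
d_j^c(\mathcal{C})\ \geq\ d_{j-1}^c(\mathcal{C})+\min_{u_0\neq 0}wt\Big((u_0\ \cdots\ u_j)(G_j^\top\ \cdots\ G_0^\top)^\top\Big).
\]
The second device is that, for $1\leq i\leq\delta$, the first $i+1$ rows of a generator matrix of an $N$-fold partial simplex code of dimension $\delta+1$ form, up to column permutation, an $(N\cdot 2^{\delta-i})$-fold partial simplex code of dimension $i+1$, whose distinguished all-ones codeword is precisely the row coming from $G_0$; hence, by the proposition on partial simplex codes and since the side condition $u_0\neq 0$ forces the coefficient of $G_i$ (not of $G_0$) to be nonzero, that condition always excludes the all-ones word, so any such window has weight $N\cdot 2^{i-1}$ on the corresponding columns. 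Writing a window as $A+B$, where $A$ collects the $n_1$ coordinates coming from $S(\delta+1)^m_1$ and $B$ the $n-n_1$ coordinates coming from $S$, the case $N=m\cdot 2^{\delta-i}$ gives $wt(A)=\tfrac{n_1}{2}$ for every window with $u_0\neq 0$ and every $i\leq\delta$.

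For $j\leq a_b-1$ one has $j+1\leq a_y$ for all $y$, so the top $j+1$ rows of each block $S(a_y)_1$ of $S$ form $S(j+1)_1^{2^{a_y-1-j}}$; consequently the top $j+1$ rows of $[S(\delta+1)^m_1\ S]$ form $S(j+1)_1^{n/2^{j}}$, and in particular $wt(G_0)=n$ and $wt(G_i)=\tfrac n2$ for $1\leq i\leq j$. Feeding this into \eqref{main1} gives
\[
n+j\tfrac n2=\sum_{i=0}^{j}\min_{u_0\neq 0}wt(i\text{-th block})\ \leq\ d_j^c(\mathcal{C})\ \leq\ \sum_{i=0}^{j}wt(G_i)=n+j\tfrac n2,
\]
so $d_j^c(\mathcal{C})=n+j\tfrac n2$; that no binary $(n,1,\delta)$ code does better follows from the inductive argument in the proof of Theorem~\ref{optimal}, since raising some $wt(G_i)$ above $\tfrac n2$ forces a strictly smaller column distance at an earlier or equal index.

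For $a_{x+1}<j+1\leq a_x$ with $x\in\{1,\dots,b-1\}$ I only need a lower bound on $wt(B)$ over windows with $u_0\neq 0$, which I then insert into the recursion above. The blocks of $S$ with $a_y\geq j+1$ are exactly $S(a_1)_1,\dots,S(a_x)_1$; their top $j+1$ rows form partial simplex generators $S(j+1)_1^{2^{a_y-1-j}}$ sharing $G_0$ as all-ones row, so jointly these columns carry a copy of $S(j+1)_1^{N}$ with $N=\sum_{y=1}^{x}2^{a_y-1-j}$, and since $u_0\neq 0$ keeps the window off the all-ones word, its restriction to these columns has weight $N\cdot 2^{j-1}=2^{a_1-2}+\cdots+2^{a_x-2}$. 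The remaining columns of $S$, coming from the blocks with $a_y\leq j$, contribute weight $\geq 0$, so $wt(B)\geq 2^{a_1-2}+\cdots+2^{a_x-2}$ and hence $\min_{u_0\neq 0}wt(A+B)\geq\tfrac{n_1}{2}+2^{a_1-2}+\cdots+2^{a_x-2}$; the recursion then yields the asserted bound. The only real difficulty is bookkeeping: one must track, through the row reversal and the nested block layout of $S$, exactly which sub-blocks are genuine (multi-fold) partial simplex generators and that in each of them the distinguished codeword is the $G_0$-row, so that $u_0\neq 0$ truly rules it out; the unspecified $\ast$-entries of $S$ never need to be determined, because for $j\leq a_x-1$ a length-$(j+1)$ window only meets the $S(a_y)_1$-parts of the blocks it encounters.
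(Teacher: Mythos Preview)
Your proof is correct and follows essentially the same route as the paper's: both split the increment $d_j^c-d_{j-1}^c$ into the contribution $\tfrac{n_1}{2}$ from $S(\delta+1)_1^m$ and the contribution from the first $j{+}1$ rows of $S$, and then lower-bound the latter by recognizing the first $2^{a_1-1}+\cdots+2^{a_x-1}$ columns as a folded partial simplex generator. You supply more detail than the paper (explicitly handling the case $j\le a_b-1$, non-catastrophicity, and the optimality appeal), though note a small notational slip: in your general device you first let $N$ denote the fold number of the ambient $S(\delta+1)_1^N$ but then write the window weight as $N\cdot 2^{i-1}$ and set ``$N=m\cdot 2^{\delta-i}$'', which only makes sense if $N$ is reinterpreted as the fold number of the resulting $S(i+1)_1$-code; the computation itself is unaffected.
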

 
\begin{proof} One has
$d_j^c(\mathcal{C})\geq d^c_{j-1}(\mathcal{C})+\frac{n_1}{2}+\min_{u_0\neq 0}((u_j,\hdots,u_0)S_{j+1})$
   where $S_{j+1}$ is the matrix formed by the first $j+1$ rows of $S$.
   As the first $2^{a_1-1}+\cdots+2^{a_x-1}$ columns of $S_{j+1}$ form the generator matrix of a folded partial simplex code, one obtains $\min_{u_0\neq 0}((u_j,\hdots,u_0)S_{j+1})\geq 2^{a_1-2}+\cdots+2^{a_x-2}$.
\end{proof}
\begin{remark}
In \cite{joh}, some optimal binary convolutional codes are listed. The only parameters for which codes with optimal column distances are constructed in \cite{joh} and in this paper are $n=2$, $k=1$, $\delta\in\{1,2\}$. In these two cases, the optimal code is unique and clearly the same code is built in both papers. 
\end{remark}

\begin{comment}
    Also all the constructed optimal codes were non-catastrophic. If $n\geq 2^{\delta}$, the standard basis vector $e_1$ is included and the constructed codes are clearly non-catastrophic. If not\textcolor{red}{I have to finish this here. one might find some examples of catastrophic codes but then one should be able to just take one of the other examples for optimal codes}

\end{comment}

\section{Construction of binary convolutional codes of dimension $k>1$ with optimal column distances}

%In this section, we generalize the results of the preceding section to the case $k>1$.

If $G(z)=\sum_{i=0}^{\mu}G_iz^i\in\mathbb F_q[z]^{k\times n}$ with $\mu=\deg(G)$ is a row-reduced generator matrix for an $(n,k,\delta)$ convolutional code $\mathcal{C}$ with generic row degrees, then $\mu=\lceil\frac{\delta}{k}\rceil$ and $G_{\mu}$ has $k\lceil\frac{\delta}{k}\rceil-\delta$ zero rows and we will construct the code s.t. the last $k\lceil\frac{\delta}{k}\rceil-\delta$ rows of $G_{\mu}$ are zero.
  Denote by $\Tilde{G}_{\mu}\in\mathbb F^{(\delta+k-k\lceil\frac{\delta}{k}\rceil)\times n}$ the matrix consisting of the first $\delta+k-k\lceil\frac{\delta}{k}\rceil$, i.e. nonzero, rows of  $G_{\mu}$.
As $(G_0^\top\ \cdots\ \ G_{\mu-1}^\top \ \tilde{G}_{\mu}^\top)^\top$ has $  \delta+k$ rows
%, to construct a binary $(n,k,\delta)$ convolutional code $\mathcal{C}$ with optimal column distances, 
we will use simplex codes of dimension $\delta+k$ for the construction.

To obtain codes with optimal column distances, we need to start by choosing $G_0$ as generator matrix of an optimal binary (block) code, i.e. as generator matrix of a folded
%an $m_0$-fold 
simplex code of dimension $k$.
%\textcolor{red}{we should define $m$-fold simplex maybe directly after simplex in the very beginning}.
To optimize $d_1^c(\mathcal{C})$, choose $G_1$ such that $(G_0^T \ G_1^T)^{T}$ %$\begin{pmatrix}
    %G_0\\ G_1
%\end{pmatrix}$
consists of part of the columns of a folded
%an $m_1$-fold 
simplex code. It cannot consist of all columns of such a code as there is no zero column in $G_0$, similar to the case $k=1$. This leads to the following generalization of Definition \ref{ps}.

\begin{definition}
Take a generator matrix $S(\delta+k)$
%a $[2^{\delta+k}-1,\delta+k,2^{\delta+k-1}]$ 
of a simplex code 
and remove the columns whose first $k$ entries are equal to zero and define the resulting matrix as $S(k+\delta)_k\in\mathbb F_2^{(\delta+k)\times (2^{\delta+k}-2^{\delta})}$. For $m\in\mathbb N$, we call the (block) code with generator matrix $S(\delta+k)_k^m:=[S(\delta+k)_k\ \cdots\ S(\delta+k)_k]\in\mathbb F_2^{(\delta+k)\times m\cdot (2^{\delta+k}-2^{\delta})}$ an \textbf{$m$-fold $k$-partial simplex code} $\mathcal{S}(\delta+k)_k^m$ of dimension $\delta+k$. 
\end{definition}

\begin{proposition}
All codewords of $\mathcal{S}(\delta+k)_k^m$
%an $m$-fold $k$-partial simplex code of dimension $\delta+k$ 
that are linear combinations of just the first $k$ rows of $S(\delta+k)_k^m$
%generator matrix 
have weight $m\cdot 2^{\delta+k-1}$ and all other codewords have weight $m\cdot (2^{\delta+k-1}-2^{\delta-1})$. In particular, the minimum distance of such a code is equal to $m\cdot (2^{\delta+k-1}-2^{\delta-1})=m\cdot 2^{\delta-1}(2^k-1)$.
\end{proposition}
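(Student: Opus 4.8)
The plan is to reduce everything to the case $m=1$, just as was done for the partial simplex codes (Definition~\ref{ps} and the proposition following it). Indeed, since $S(\delta+k)_k^m$ consists of $m$ horizontal copies of $S(\delta+k)_k$, the weight of any codeword $u\cdot S(\delta+k)_k^m$ is exactly $m$ times the weight of $u\cdot S(\delta+k)_k$, so it suffices to prove the weight assertions for $m=1$. Thus I only need: every codeword of $\mathcal{S}(\delta+k)_k$ that is a linear combination of the first $k$ rows of $S(\delta+k)_k$ has weight $2^{\delta+k-1}$, and every other codeword has weight $2^{\delta+k-1}-2^{\delta-1}$.

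First I would fix a convenient arrangement of the columns. Order the columns of the full simplex generator matrix $S(\delta+k)$ so that the first $2^{\delta+k}-2^{\delta}$ columns are precisely those whose top $k$ entries are not all zero; write accordingly
\[
S(\delta+k)=\left(S(\delta+k)_k \ \ \begin{array}{c} 0_{k\times(2^{\delta}-1)}\\ S(\delta)\end{array}\right),
\]
where the last block collects the $2^{\delta}-1$ columns with zero top part, and the bottom $\delta$ rows restricted to that block form (a column permutation of) a generator matrix $S(\delta)$ of the $\delta$-dimensional simplex code. Now take any nonzero $u\in\mathbb F_2^{\delta+k}$ and consider the codeword $c=u\cdot S(\delta+k)$ of the full simplex code; by the proposition on simplex codes, $wt(c)=2^{\delta+k-1}$. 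Split $c=(c',c'')$ according to the block decomposition above, where $c'=u\cdot S(\delta+k)_k$ is the corresponding codeword of the $k$-partial simplex code and $c''$ is the part coming from the zero-top columns. If $u$ is supported only on the first $k$ coordinates, then $u$ annihilates the bottom block entirely, so $c''=0$ and $wt(c')=wt(c)=2^{\delta+k-1}$, giving the first claim. Otherwise the bottom part $\bar u\in\mathbb F_2^{\delta}$ of $u$ is nonzero, so $c''=\bar u\cdot S(\delta)$ is a nonzero codeword of the $\delta$-dimensional simplex code and hence $wt(c'')=2^{\delta-1}$; therefore $wt(c')=wt(c)-wt(c'')=2^{\delta+k-1}-2^{\delta-1}$, which is the second claim.

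Finally, for the minimum distance statement I would simply note that $2^{\delta+k-1}-2^{\delta-1}=2^{\delta-1}(2^{k}-1)<2^{\delta+k-1}$ for $k\geq 1$, so the smaller of the two occurring nonzero weights is $2^{\delta-1}(2^{k}-1)$, and multiplying back by $m$ gives $d_{\min}(\mathcal{S}(\delta+k)_k^m)=m\cdot 2^{\delta-1}(2^{k}-1)$. The only point requiring a little care is the claim that the bottom-left block in the displayed decomposition really is (up to column permutation) a genuine generator matrix $S(\delta)$ of the $\delta$-dimensional simplex code: the columns with zero top $k$ entries run over all vectors of the form $(0,\dots,0,w)$ with $w\in\mathbb F_2^{\delta}$, and excluding the all-zero column leaves exactly each nonzero $w\in\mathbb F_2^{\delta}$ once, which is precisely the column set of $S(\delta)$. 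That is the same bookkeeping already used in the $k=1$ proposition, so I do not expect any real obstacle; the argument is a direct generalization.
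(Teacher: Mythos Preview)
Your proof is correct and follows essentially the same approach as the paper: reduce to $m=1$, use the block decomposition $S(\delta+k)=\left(S(\delta+k)_k\ \ \begin{smallmatrix}0_{k\times(2^{\delta}-1)}\\ S(\delta)\end{smallmatrix}\right)$, and subtract the weight contributed by the $S(\delta)$ block from the constant weight $2^{\delta+k-1}$ of the full simplex code. You spell out the splitting $c=(c',c'')$ and the verification that the zero-top columns really give $S(\delta)$ a bit more carefully than the paper does, but the argument is the same.
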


\begin{proof}
Again it is enough to show the statement for $m=1$. Take a generator matrix $S(\delta+k)$
%a $[2^{\delta+k}-1,\delta+k,2^{\delta+k-1}]$ 
of a simplex code such that the first $k$ rows have zeros in the last $2^{\delta}-1$ entries, i.e. 
\begin{equation}\label{fullsimplex}
\small
    S(\delta+k)=\left(S(\delta+k)_k\ \ \begin{array}{c}
     0_{k\times(2^{\delta}-1)}\\  S(\delta)
\end{array}\right).
\end{equation}
\normalsize
%where $\widehat{G}$ is the generator matrix of a simplex code of dimension $\delta$ and $S$ the generator matrix of the code which weights we need to calculate. 
Hence, linear combinations of the first $k$ rows of $S(\delta+k)_k$ have all weight $2^{\delta+k-1}$ and linear combinations involving one of the other rows have weight $ 2^{\delta+k-1}-2^{\delta-1}$.
\end{proof}

\begin{theorem}
%Let $S(\delta+k)_1^m\in\mathbb F_2^{(\delta+k)\times m\cdot (2^{\delta+k}-2^{\delta})}$ be a generator matrix of an $m$-fold $k$-partial simplex code. 
Let $\mathcal{C}$ be an $(m\cdot 2^{\delta}(2^k-1),k,\delta)$ convolutional code with generator matrix $G(z)=\sum_{i=0}^{\lceil\frac{\delta}{k}\rceil}G_iz^i\in\mathbb F_2[z]^{k\times m\cdot 2^{\delta}(2^k-1)}$ where $(G_0^\top\ \cdots\ \ G_{\mu-1}^\top \ \tilde{G}_{\mu}^\top)^\top=S(\delta+k)^m_1$. Then, $\mathcal{C}$ is non-catastrophic and
$$d_j^c(\mathcal{C})=\begin{cases}
n\cdot\frac{2^{k-1}}{2^k-1}+j\frac{n}{2} & \text{for}\quad j\leq\lfloor\frac{\delta}{k}\rfloor\\
n\cdot\frac{2^{k-1}}{2^k-1}+\lfloor\frac{\delta}{k}\rfloor\cdot \frac{n}{2}  & \text{for}\quad j\geq\lfloor\frac{\delta}{k}\rfloor
\end{cases}$$
\end{theorem}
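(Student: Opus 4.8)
The plan is to mimic the proof of the rate $1/n$ case (Theorem following Definition~\ref{ps}) almost verbatim, replacing the role of the partial simplex code by the $k$-partial simplex code whose minimum-distance structure was just recorded in the preceding proposition. First I would establish non-catastrophicity: since $(G_0^\top\ \cdots\ G_{\mu-1}^\top\ \tilde G_\mu^\top)^\top = S(\delta+k)_1^m$ contains all nonzero columns whose first $k$ coordinates are not all zero, in particular it contains each standard basis vector $e_1,\dots,e_k$; hence $G_0$ has full row rank (it contains the $k\times k$ identity among its columns), so the code is delay-free, and in fact the same argument at the ``sliding'' level shows $G(z)$ is left-prime, giving non-catastrophicity. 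This also justifies using the delay-free formula $d_j^c(\mathcal{C})=\min\{wt((u_0,\dots,u_j)G_j^c): u_0\neq 0\}$.

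Next I would compute $d_0^c$: $G_0$ is the generator matrix of a $k$-dimensional folded simplex code (it is exactly the first $k$ rows of $S(\delta+k)_1^m$, which already have the $k$-partial-simplex form), so by the proposition every nonzero codeword has weight $m\cdot 2^{\delta+k-1}$; rewriting in terms of $n=m\cdot 2^{\delta}(2^k-1)$ gives $d_0^c(\mathcal{C}) = m\cdot 2^{\delta+k-1} = n\cdot \frac{2^{k-1}}{2^k-1}$, matching the claimed $j=0$ value. Then, exactly as in the rate $1/n$ proof, for $1\le j\le \lfloor\delta/k\rfloor$ I would bound the increment $d_j^c(\mathcal{C})-d_{j-1}^c(\mathcal{C})$ from below using the lower bound of Lemma~\ref{bounds}: the matrix $\begin{pmatrix} G_j^\top & \cdots & G_0^\top\end{pmatrix}^\top$, after deleting possible zero rows at the top coming from $G_\mu$ having zero rows, is (up to column permutation and folding) the generator matrix of an $m'$-fold $k$-partial simplex code of dimension $jk+k$ for the appropriate multiplicity $m'$, where the constraint $u_0\neq 0$ guarantees the relevant codeword is a genuine ``mixed'' codeword and not a combination of only the top $k$ rows. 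By the proposition its weight is $m'\cdot(2^{jk+k-1}-2^{(j-1)k+?-1})$ — I would compute $m'$ so that this simplifies to exactly $\frac{n}{2}$, which the statement predicts. Summing the increments gives $d_j^c(\mathcal{C})\ge n\cdot\frac{2^{k-1}}{2^k-1}+j\frac{n}{2}$, and the matching upper bound is the right-hand side of \eqref{main1}, namely $\sum_{i=0}^{\min(j,\delta)}wt_r(G_i)$ for a suitable row $r$, which by construction equals the same expression; for $j\ge\lfloor\delta/k\rfloor$ one squeezes $d_j^c$ between $d_{\lfloor\delta/k\rfloor}^c$ and $\sum_i wt_r(G_i)$ as before.

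The main obstacle I anticipate is the bookkeeping around generic row degrees when $k\nmid\delta$: here $\mu=\lceil\delta/k\rceil$ but $G_\mu$ has $k\lceil\delta/k\rceil-\delta$ zero rows, so the truncated matrices $\begin{pmatrix}G_j^\top&\cdots&G_0^\top\end{pmatrix}^\top$ are not literally full $k$-partial simplex generator matrices — one has to carefully identify, for each $j$, exactly how many nonzero rows survive and verify that the surviving rows still have the ``all columns with nonzero first-block'' property so that the proposition on $k$-partial simplex codes applies and the weight comes out to $\frac{n}{2}$ independently of $u_0\neq 0$. A secondary subtlety is checking that the folding multiplicities $m'$ at each level $j$ combine so that $m'\cdot 2^{\delta-1}(2^k-1)$ — or the analogous ``mixed'' weight — collapses to precisely $\frac{n}{2}=m\cdot 2^{\delta-1}(2^k-1)$; this is the step where I would be most careful to get the exponents right, but it is ultimately the same computation as in the $k=1$ case with $2^{\delta}$ replaced by $2^{\delta}(2^k-1)$ in the relevant places.
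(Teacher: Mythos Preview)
Your plan is correct and follows essentially the same route as the paper: non-catastrophicity from the identity block inside $G(z)$, $d_0^c$ from $G_0$ being a folded simplex code of dimension $k$, the increment $d_j^c-d_{j-1}^c=\frac{n}{2}$ via the $k$-partial simplex proposition (the paper writes the fold as $m'=\frac{n}{2^{kj}(2^k-1)}$, giving weight $m'\cdot 2^{kj-1}(2^k-1)=\frac{n}{2}$), and the upper bound for $j>\lfloor\delta/k\rfloor$ from a zero-row of $G_j$ via \eqref{main1}. Your anticipated obstacle about zero rows from $G_\mu$ does not actually arise in the increment step, since for $j\le\lfloor\delta/k\rfloor$ one has $j<\mu$ whenever $k\nmid\delta$ (and $G_\mu$ has no zero rows when $k\mid\delta$), so $\begin{pmatrix}G_j^\top&\cdots&G_0^\top\end{pmatrix}^\top$ is, up to row permutation, a genuine $m'$-fold $k$-partial simplex generator with no deletions needed.
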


\begin{proof}
First, $\mathcal{C}$ is non-catastrophic since $I_k$ is a submatrix of $G(z)$.
Since $G_0$ is the generator matrix of an $\frac{n}{2^{k}-1}$-fold simplex code of dimension $k$, we obtain $d_0^c(\mathcal{C})=
n\cdot\frac{2^{k-1}}{2^k-1}$. 
For $j=1,\hdots,\lfloor\frac{\delta}{k}\rfloor$, one has
\begin{equation}
\small
\begin{aligned}
     \min_{u_0\neq 0}  wt\left((u_0\ \hspace{-1.5mm} \cdots\ \hspace{-1.5mm} u_j)\begin{pmatrix}
    G_j\vspace{-1.5mm}\\ \vdots\\ \vspace{-1.5mm}G_0
\end{pmatrix}\right) =\frac{n\cdot 2^{kj-1}(2^k-1)}{2^{kj}(2^k-1)} =\frac{n}{2} \nonumber
\end{aligned}
\end{equation}
\normalsize
since we have up to row permutations an $m$-fold $k$-partial simplex code $\mathcal{S}(k(j+1))^m_k$ with $m=\frac{n}{2^{kj}(2^k-1)}$. 
Moreover, the condition $u_0\neq 0$ ensures that we do not get a linear combination of the first $k$ rows of $S(k(j+1))^m_k$ and hence, all occurring weights are equal to the minimum weight and we obtain
$ d_j^c(\mathcal{C})-d_{j-1}^c(\mathcal{C})=\frac{n}{2}$.\\
For $j>\lfloor\frac{\delta}{k}\rfloor$, $G_j$ contains at least one zero row and the corresponding row in the sliding generator matrix has then weight $n\cdot\frac{2^{k-1}}{2^k-1}+\lfloor\frac{\delta}{k}\rfloor\cdot \frac{n}{2}$, i.e. the column distances cannot increase any further, due to the upper bound in \eqref{main1}.
\begin{comment}
    since $\begin{pmatrix}
    G_j\\ \vdots\\ G_0
\end{pmatrix}$ is a generator matrix of a $(2^{\delta-j}\cdot m)$ - fold partial simplex code of dimension $j+1$ (and the condition $u_0\neq 0$ ensures that we do not obtain the codeword $(1\ \cdots \ 1)$ inside this simplex code). This shows $d_j^c(\mathcal{C})\geq n+j\frac{n}{2}$ for $j\leq\delta$. Moreover, the upper bound of \eqref{main1} yields that $d_j^c(\mathcal{C})\leq  \sum_{i=0}^{\min(j,\delta)}wt(G_i)  =n+j\frac{n}{2}$ for $j\leq\delta$.\\
For $j\geq\delta$, we obtain $n+\delta\frac{n}{2}=d_{\delta}^c(\mathcal{C})\leq d_j^c(\mathcal{C})\leq \sum_{i=0}^{\delta}wt(G_i)=n+\delta\frac{n}{2}$. This completes the proof of the theorem.
\end{comment}
\end{proof}

\begin{comment}
    In general we can construct in this way binary convolutional codes with parameters $(m\cdot 2^{\delta},1,\delta)$ for $m,\delta\in\mathbb N$ 
and column distances
$$d_j^c=\begin{cases}
n+j\frac{n}{2} & \text{for}\quad j\leq\delta\\
n+n\frac{\delta}{2}  & \text{for}\quad j\geq\delta
\end{cases}$$
\end{comment}

\begin{theorem}\label{optimal}
Let $\mathcal{C}$ be a binary $(m\cdot 2^{\delta}(2^k-1),k,\delta)$ convolutional code constructed as in the previous theorem. Then, $\mathcal{C}$ has optimal column distances in the sense of Definition \ref{defopt}. %There exists no binary $(m\cdot 2^{\delta}(2^k-1),k,\delta)$ convolutional code $\hat{\mathcal{C}}$ such that $d^c_j(\hat{\mathcal{C}})>d^c_j(\mathcal{C})$ for some $j\in\mathbb N_0$ and $d^c_i(\hat{\mathcal{C}})=d^c_i(\mathcal{C})$ for all $0\leq i<j$.
\end{theorem}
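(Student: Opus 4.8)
The plan is to mirror the proof of the corresponding rate-$1/n$ optimality theorem, with one new ingredient --- an averaging argument --- replacing the complementation trick that was available when $k=1$. As is natural for sequential decoding, I take the competitor $\hat{\mathcal C}$ to be non-catastrophic, hence delay-free, so that $\hat G_0=\hat G(0)$ has full row rank and $d_j^c(\hat{\mathcal C})=\min\{wt((u_0\ \cdots\ u_j)\hat G_j^c)\ :\ u_0\neq 0\}$; this is also the right class for the comparison in Definition \ref{defopt}, since otherwise a catastrophic code whose $G_0$ has low rank can carry a spuriously large $d_0^c$. Exactly as in the previous proof, for our construction both bounds in \eqref{main1} are sharp, and in fact every vector $(u_0\ \cdots\ u_j)(G_j^\top\ \cdots\ G_0^\top)^\top$ with $u_0\neq 0$ has one and the same weight, because it is a codeword of a folded $k$-partial simplex code of dimension $k(j+1)$ that is not a combination of the distinguished first $k$ rows; this is precisely why $d_j^c(\mathcal C)=n\frac{2^{k-1}}{2^k-1}+j\frac n2$ for $j\leq\lfloor\delta/k\rfloor$, the two sides of \eqref{main1} meeting.

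I would then induct on $j$. For $j=0$: $G_0$ generates an $(m\,2^\delta)$-fold simplex code of dimension $k$, whose minimum distance $d_0^c(\mathcal C)=m\,2^{\delta+k-1}$ meets the Griesmer bound (one checks $\sum_{i=0}^{k-1}\lceil m\,2^{\delta+k-1}/2^i\rceil=n$), so no binary $[n,k]$ block code, and hence no non-catastrophic $\hat{\mathcal C}$, can have a larger $d_0^c$; moreover any $\hat G_0$ attaining this value is equivalent to a replicated simplex generator matrix, so it has no zero column --- this last fact is all I shall reuse. For the inductive step, assume $d_i^c(\hat{\mathcal C})=d_i^c(\mathcal C)$ for all $i\leq j$ and pick $v(z)\in\hat{\mathcal C}$ with $v_0\neq 0$ and $wt(v_0,\dots,v_j)=d_j^c(\mathcal C)$. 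For every $w(z)\in\hat{\mathcal C}$ the codeword $v(z)+z^{j+1}w(z)$ has nonzero constant term, agrees with $v(z)$ in coordinates $0,\dots,j$, and has $(j+1)$-st coefficient $v_{j+1}+w_0$, where $w_0$ runs through the full row space of $\hat G_0$ as $w(z)$ varies. Since $\hat G_0$ has no zero column, for each coordinate $c$ the value $(v_{j+1}+w_0)_c$ equals $1$ for exactly half of the $2^k$ choices of $w_0$, so $\sum_{w_0}wt(v_{j+1}+w_0)=n\,2^{k-1}$ and therefore $\min_{w_0}wt(v_{j+1}+w_0)\leq\frac n2$. Hence $d_{j+1}^c(\hat{\mathcal C})\leq d_j^c(\mathcal C)+\frac n2$; if $j+1\leq\lfloor\delta/k\rfloor$ the right-hand side equals $d_{j+1}^c(\mathcal C)$, which closes the induction in the range where the column distances still grow.

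The hard part is the complementary range $j\geq\lfloor\delta/k\rfloor$, where $d_{j+1}^c(\mathcal C)=d_j^c(\mathcal C)$: the averaging bound only gives an increment at most $\frac n2$, not $0$, so one needs the separate structural input that over $\mathbb F_2$, with lengths of the form $m\,2^\delta(2^k-1)$, the column distance profile of any $(n,k,\delta)$ code saturates already at $j=\lfloor\delta/k\rfloor$ --- concretely, that the zero rows of $\tilde G_\mu$ in a minimal encoder cap the upper bound in \eqref{main1}, together with $L=\lfloor\delta/k\rfloor$ for these parameters. This is the same gap that is present (and left implicit) in the rate-$1/n$ argument, and it is where I expect the bulk of the remaining work to lie.
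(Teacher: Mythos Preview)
Your approach is correct in the range $j+1\leq\lfloor\delta/k\rfloor$ and genuinely different from the paper's. The paper fixes $G_0,\ldots,G_j$ to be those of the $k$-partial simplex construction (arguing, via sharpness of both bounds in \eqref{main1}, that a competitor can be taken to match these) and then shows that increasing any row weight of $G_{j+1}$ forces a low-weight two-row combination inside the ambient simplex code $S(\delta+k)$; this is a structural argument tied to the specific construction. Your averaging step, by contrast, works for an arbitrary delay-free competitor $\hat{\mathcal C}$: from $d_0^c(\hat{\mathcal C})=d_0^c(\mathcal C)$ and the Griesmer bound you extract only that $\hat G_0$ has no zero column, and then the coset average $\sum_{w_0}wt(v_{j+1}+w_0)=n\cdot 2^{k-1}$ yields the Plotkin-type increment bound $d_{j+1}^c(\hat{\mathcal C})\leq d_j^c(\hat{\mathcal C})+\tfrac n2$ with no further structural input. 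This is more robust --- it does not rely on the competitor sharing the construction's $G_0,\ldots,G_j$ --- while the paper's route, in exchange, gives finer information about the shape of an extremal encoder.

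Regarding $j\geq\lfloor\delta/k\rfloor$: the paper disposes of this range in a single clause (``we can assume $j<\lfloor\delta/k\rfloor$ as $G_i$ contains a zero row for $i>\lfloor\delta/k\rfloor$''), which refers to the construction's own generator and, read literally, does not exclude a competitor $\hat{\mathcal C}$ that matches $\mathcal C$ through $j=\lfloor\delta/k\rfloor$ and then overtakes it. So the difficulty you flag is the same one the paper leaves implicit; you are not missing an argument that the paper supplies.
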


\begin{proof} We can assume $m=1$.
As for $k=1$, lower and upper bound of \eqref{main1} are sharp for our construction. Hence, to achieve better column distances than with our construction, one would need to increase the weight of at least one $G_i$. We use this to show via induction with respect to $j$ that our construction for the $G_j$ leads to optimal column distances. Obviously, the choice of $G_0$ leads to optimal $d_0^c$. 
%Assume that for any $j\in\mathbb N$, our construction leads to optimal $d_j^c$ and we want to show that it also leads to optimal $d_{j+1}^c$. Therefore, we 
Suppose that $G_0,\hdots, G_j$ are given as in the previous theorem and we need to find $G_{j+1}$ such that $d_{j+1}^c$ is optimal. For this we can assume $j<\lfloor\frac{\delta}{k}\rfloor$ as $G_i$ contains a zero row for $i>\lfloor\frac{\delta}{k}\rfloor$. One can only get larger $d_{j+1}^c$ than in our construction if the weight of all rows of $G_{j+1}$ is larger than in our construction. Assume we increase the weight of one row of $G_{j+1}$ as in our construction to obtain $\hat{G}_{j+1}$ and denote the number of the row to which it corresponds in $S(\delta+k)$, as in \eqref{fullsimplex}, by $r_1$ and the generator matrix with the increased weight in row $r_1$ by $\hat{S}(\delta+k)$. The first $k$ rows of $S(\delta+k)_{k}$ correspond to $S(k)$.
Hence, there exists one of the first $k$ rows of $\hat{S}(\delta+k)$, whose index we denote by $r_2$, such that the weight of the sum of rows $r_1$ and $r_2$ of $\hat{S}(\delta+k)$ is smaller than the minimum weight of $\mathcal{S}(\delta+k)$. But since we did not change anything in the last $2^{\delta}-1$ columns of $S(\delta+k)$, the weight decreased in the first $2^{\delta+k}-2^{\delta}$ columns, which correspond to $S(\delta+k)_k$ used to define our convolutional code. This shows the optimality of the column distances.
\begin{comment}
    i.e. larger than $\frac{n}{2}$. But then the weight of the sum of the first and the last row of $G_{j+1}^c$ is equal to $n+j\frac{n}{2}+x$ where $x$ is the weight of the sum of $G_{j+1}$ and $(1\ \cdots\ 1)$, which is smaller than $\frac{n}{2}$, i.e. we obtain a smaller $d^c_{j+1}$ than with our construction.
\end{comment}
\end{proof}

\begin{example}
    Take $k=2$, $n=12$ and $\delta=2$, then $\mu=1$ and $\delta+k=4$. The optimal $G_0$, leading to $d_0^c=8$ is %the generator matrix of a $4$-fold simplex code of dimension $k=2$, i.e.\\
    $G_0= S(2)^{4}=\left(\hspace{-2mm}\begin{array}{cccccccccccc}
    1\hspace{-2mm} & \hspace{-2mm}1\hspace{-2mm} & \hspace{-2mm}0\hspace{-2mm} & \hspace{-2mm}1\hspace{-2mm} & \hspace{-2mm}1\hspace{-2mm} & \hspace{-2mm}0\hspace{-2mm} & \hspace{-2mm}1\hspace{-2mm} & \hspace{-2mm}1\hspace{-2mm} & \hspace{-2mm}0\hspace{-2mm} & \hspace{-2mm}1\hspace{-2mm} & \hspace{-2mm}1\hspace{-2mm} & \hspace{-1.5mm}0\\
    1\hspace{-1.5mm} & \hspace{-1.5mm}0\hspace{-1.5mm} & \hspace{-1.5mm}1\hspace{-1.5mm} & \hspace{-1.5mm}1\hspace{-1.5mm} & \hspace{-1.5mm}0\hspace{-1.5mm} & \hspace{-1.5mm}1\hspace{-1.5mm} & \hspace{-1.5mm}1\hspace{-1.5mm} & \hspace{-1.5mm}0\hspace{-1.5mm} & \hspace{-1.5mm}1\hspace{-1.5mm} & \hspace{-1.5mm}1\hspace{-1.5mm} & \hspace{-1.5mm}0\hspace{-1.5mm} & \hspace{-1.5mm}1
    \end{array}\hspace{-2mm} \right)$. To maximize $d_1^c$ we take $G_1=\left(\hspace{-2mm}\begin{array}{cccccccccccc}
    1\hspace{-2mm} & \hspace{-2mm}1\hspace{-2mm} & \hspace{-2mm}1\hspace{-2mm} & \hspace{-2mm}0\hspace{-2mm} & \hspace{-2mm}0\hspace{-2mm} & \hspace{-2mm}0\hspace{-2mm} & \hspace{-2mm}1\hspace{-2mm} & \hspace{-2mm}1\hspace{-2mm} & \hspace{-2mm}1\hspace{-2mm} & \hspace{-2mm}0\hspace{-2mm} & \hspace{-2mm}0\hspace{-2mm} & \hspace{-1.5mm}0\\
    1\hspace{-1.5mm} & \hspace{-1.5mm}1\hspace{-1.5mm} & \hspace{-1.5mm}1\hspace{-1.5mm} & \hspace{-1.5mm}1\hspace{-1.5mm} & \hspace{-1.5mm}1\hspace{-1.5mm} & \hspace{-1.5mm}1\hspace{-1.5mm} & \hspace{-1.5mm}0\hspace{-1.5mm} & \hspace{-1.5mm}0\hspace{-1.5mm} & \hspace{-1.5mm}0\hspace{-1.5mm} & \hspace{-1.5mm}0\hspace{-1.5mm} & \hspace{-1.5mm}0\hspace{-1.5mm} & \hspace{-1.5mm}0
    \end{array}\hspace{-2mm} \right)$ such that $(G_0^{T} \ G_1^T)^T=S(4)_2$
    %$\begin{pmatrix}
        %G_0\\ G_1
    %\end{pmatrix}=S(4)_2$
    %consists of all columns of a generator matrix of a simplex code of dimension $2k=4$ except the 3 columns starting with $\begin{pmatrix}
    %    0\\ 0
   % \end{pmatrix}$.
    and $d_1^c=14$.
\end{example}

%\textcolor{red}{write something that we can codes for all $n$ with adding some additional columns like in the case $k=1$ and maybe write a very short conclusion}

To obtain convolutional codes with optimal column distances where $n$ is not of the form $m\cdot 2^{\delta}(2^k-1)$ for some $m\in\mathbb N$, we can do the same procedure as described at the end of the preceding section, i.e. use $\mathcal{S}(\delta+k)_k^m$ with $m=\lfloor\frac{n}{2^{\delta}(2^k-1)}\rfloor$ and add some columns of $S(\delta+k)_k$.

\section{Conclusion}
Convolutional codes with optimal or near optimal column distances are attractive as they are capable of correcting a maximal number of errors per time interval. In this paper, we start with simplex codes and using both the technique of puncturing and folding we are able to construct new binary convolutional codes whose column distances are optimal for certain parameters and near optimal for the other parameters.

\section{Acknowledgements}
This work is supported by the SNSF grant n. 188430 and by CIDMA through FCT, UIDB/04106/2020 and UIDP/04106/2020. The work of the first author was also supported by FCT grant UI/BD/151186/2021 and the work of the second author was also by Forschungskredit of the University of Zurich, grant no. FK-21-127.

%\textcolor{green}{Make sure the references are well written}

\end{document}